\newcommand{\bmi}{\bm i}
\renewcommand{\epsilon}{\varepsilon}
\renewcommand{\phi}{\varphi}
\DeclareMathOperator{\var}{var}
\newtheorem{theorem}{Theorem}
\newtheorem{proposition}[theorem]{Proposition}
\newtheorem{Corollary}[theorem]{Corollary}
\newtheorem{lemma}[theorem]{Lemma}
\begin{document}

\title{Asymptotic regime for {\color{blue} impropriety} tests of complex random vectors}

\author{Florent Chatelain, Nicolas Le Bihan and Jonathan H. Manton}

\author{Florent Chatelain,~\IEEEmembership{Member,~IEEE,}
        Nicolas~Le Bihan~\IEEEmembership{Member,~IEEE,}
        and~Jonathan~H.~Manton,~\IEEEmembership{Fellow,~IEEE}
\thanks{F. Chatelain and N. Le Bihan are with Univ. Grenoble Alpes, CNRS,
Grenoble INP${}^\dag$, GIPSA-Lab, 38000 Grenoble, France.
{ ${}^\dag$: Institute of Engineering Univ. Grenoble Alpes}. Jonathan H.
Manton is with the Department of Electrical and Electronic Engineering, The
University of Melbourne, Victoria, 3010 Australia.}}%


\maketitle

\begin{abstract}
Impropriety testing for complex-valued vectors 
has been considered lately due to potential applications 
ranging from digital communications to {\color{blue}complex media imaging}. This paper
 provides new results for such tests in the asymptotic regime, {\em i.e.} when
the vector {\color{blue}dimension} and sample size grow commensurately to infinity. The studied tests
are based on invariant statistics named {\color{blue} \em impropriety coefficients}.
Limiting distributions for these statistics are derived, together with those of
the Generalized Likelihood Ratio Test (GLRT) and Roy's test, in the
Gaussian case. This characterization in the asymptotic regime allows also to
identify a phase transition in Roy's test with potential application in
detection of complex-valued low-rank {\color{blue} subspace} corrupted by proper noise in large
datasets. Simulations illustrate the accuracy of the proposed asymptotic
approximations.
\end{abstract}

\section{Introduction}
\label{sec:intro}
Testing for properness consists of deciding whether $N$-dimensional complex
random vector ${\bf z} \in {\mathbb C}^N$ is {\em proper} or {\em improper}.
{\color{blue}Let us recall that a complex vector
${\bf z}$ is called {\em circular}  when
${\bf z}$ is equal in distribution to $e^{\bmi \theta}{\bf z}$ ({\em i.e.} the distribution of
${\bf z}$ is invariant by a rotation of angle $\theta$ in the complex domain
\cite{Comon:1994,Picinbono:1994}). {Circularity} for Gaussian complex vectors is actually called {\em properness}}. {\color{blue} The problem of testing weither a complex Gaussian vector is {\em proper} or not has several potential applications in signal processing and} has been considered by several authors, using different means, including Generalized Likelihood Ratio Tests
(GLRT) \cite{Ollila:2004,Schreier:2006}, locally most powerful (LMP) test
\cite[Chapter 3]{Schreier:2010} or frequency domain tests \cite{Chandna:2017}.
The asymptotic behavior of GLRT was studied for large sample sizes in the case of
random variables ($N=1$) \cite{Delmas:2011} or small/fixed values of $N$
\cite{Walden:2009}. The situation where both the dimension of the complex vector
and the size of the sample tend to infinity was not considered until our recent preliminary study \cite{Chatelain:2019}. This article formalizes and extends \cite{Chatelain:2019}, therefore filling {\color{blue} a} gap and provides insight
into the asymptotic behavior of {\em {\color{blue} impropriety}} test when the vector
dimension and sample size grow commensurately to infinity. {\color{blue} Practical applications of this asymptotic regime occur for example in communications when a large number of dense arrays are deployed, {\em i.e.} for massive MIMO \cite{zarei:2016} and cognitive radio \cite{axell:2012}, as well as in fMRI \cite{adali:2011} or phase retrieval and imaging in complex media \cite{dong:2019}.}

Central to many of the tests available in the literature, the set of {\em
invariant parameters} was first considered in \cite{Andersson:1975}, allowing for the
derivation of {\em invariant statistics} used in \cite{Walden:2009}. Invariant
parameters are in one-to-one correspondence with {\em canonical correlation
coefficients} \cite{Schreier:2006} as explained in \cite{Schreier:2010}, {\color{blue}and referred
to as {\em impropriety coefficients} \cite{hellings2019}.}

In this article, we consider $N$-dimensional complex-valued centered
random vectors with Cartesian form such as ${\bf z}={\bf u} + \bmi {\bf v}$,
{\em i.e.} ${\bf u}$ and ${\bf v}$ are $N$-dimensional real vectors with zero
mean, {\em i.e.} ${\bf u} \in {\mathbb R}^N$, ${\bf v} \in {\mathbb R}^N$, and
${\mathbb E}[{\bf u}]={\mathbb E}[{\bf v}]={\bf 0}$. Two {\em augmented
representations} are classically used in the literature to study complex vectors
${\bf z} \in {\mathbb C}^N$, namely the {\em real augmented representation}
${\bf x}$ and the {\em complex augmented representation} $\tilde{\bf z}$. The
former consists of representing ${\bf z} \in {\mathbb C}^N$ by a twice larger
real-valued vector ${\bf x}= \left[{\bf u}^T , {\bf v}^T \right]^T \in {\mathbb
R}^{2N}$ made up from the real and imaginary parts of ${\bf z}$, while the latter
consists of using a twice larger complex-valued vector $\tilde{\bf z}=
\left[{\bf z}^T , {\bf z}^{*T} \right]^T \in {\mathbb C}^{2N}$ containing ${\bf
z}$ and its conjugate ${\bf z}^*$. Both representations are equivalent and
easily connected using linear mappings given for example in
\cite{Schreier:2010}. In this paper, we make use of the real representation
${\bf x} \in {\mathbb R}^{2N}$.

Recalling that ${\mathbb E}[{\bf z}]={\bf 0}$, and thus that ${\mathbb E}[{\bf
x}]={\bf 0}$, second order statistics of ${\bf z} \in {\mathbb C}^N$ are
contained in the real-valued covariance matrix ${\bf C} \in {\mathbb
R}^{2N\times 2N}$ of the real representation vector ${\bf x}$, which reads:
\begin{equation}
{\mathbb E}[{\bf x}{\bf x}^T]= {\bf C} = \begin{pmatrix}
{\bf C}_{{\bf u}{\bf u}} & {\bf C}_{{\bf u}{\bf v}} \\
{\bf C}_{{\bf v}{\bf u}} & {\bf C}_{{\bf v}{\bf v}}\\
\end{pmatrix}
\label{eq:covmat}
\end{equation}
where ${\bf C}_{{\bf a}{\bf b}} \in {\mathbb R}^{N \times N}$ denotes the real-valued (cross)covariance matrix between real vectors ${\bf a}$ and ${\bf b}$, and with ${\bf C}_{{\bf b}{\bf a}}= {\bf C}_{{\bf a}{\bf b}}^T$. A complex-valued Gaussian vector ${\bf z} \in {\mathbb C}^N$ is called {\em proper} {\em iff} the following two conditions hold:
\begin{equation}
\begin{array}{lcr}
  {\bf C}_{{\bf u}{\bf u}} = {\bf C}_{{\bf v}{\bf v}} & \text{and} & {\bf C}^T_{{\bf u}{\bf v}} = - {\bf C}_{{\bf u}{\bf v}}.
 \end{array}
\label{eq:prop}
\end{equation}
If these conditions are not fulfilled, then ${\bf z}$ is called {\em
improper}.
{\color{blue}
{\em Properness} thus means that real and imaginary parts, {\em
i.e.} ${\bf u}$ and ${\bf v}$, have the same covariance matrix and their
cross-covariance is skew-symmetric (2). When using the complex
representation, {\em properness} is equivalent to having ${\mathbb E}[{\bf
z}{\bf z}^T]={\bf 0}$, which means that ${\bf z}$ and ${\bf z}^*$ are
uncorrelated. Note that even in the Gaussian case,
 ${\bf z}$ and ${\bf
z}^*$ can not be independent (they are related by a one-to-one deterministic application) even being uncorrelated.
Thus, the statistical problem of measuring the impropriety of complex-valued vectors is by nature different from standard canonical correlation analysis where the two Gaussian vectors are independent when their canonical correlation coefficients are zero. This emphasizes why state-of-the-art results from multivariate analysis such as canonical correlation analysis, or more recent works to test the independence between complex random vectors
\cite{Klausner2014,Santamaria2017} or the diagonality of the covariance matrix \cite{Mestre2014}, cannot be extended in an easy manner to impropriety testing.}


The original contributions of the present work consist of the following
results in the asymptotic high-dimensional regime: the limiting
distribution of the maximal invariant statistics, and accurate
approximations for standard test statistics used in multivariate analysis,
namely the GLRT and Roy's test, are derived under the null
hypothesis of properness. Moreover, a phase transition
behavior is shown to exist, allowing for the detection of complex-valued low-rank signals (modeled as complex vectors) corrupted by proper noise.

The paper is organized as follows.
In Section~\ref{sec:invstat}, the  maximal invariant statistics are introduced
for {\color{blue} impropriety} testing. Their limiting distributions are also derived.
Section~\ref{sec:testing} provides the exact and limiting distributions of
GLRT, as well as the  limiting distributions of Roy's test statistics.
Some simulations
are conducted in \ref{sec:simu} to appreciate the accuracy of the proposed
approximations. The relation with canonical correlation analysis results
is discussed in Section~\ref{sec:discuss}.
Some concluding remarks are given in the last
section.

\section{\color{blue}  Impropriety coefficients}
\label{sec:invstat}
In order to design tests based on second order statistics and that are not sensitive to reparametrization (by linear transformation), it is known \cite{Andersson:1975,Schreier:2010} that the {\color{blue} impropriety coefficients} should be the quantities to use. In this section, we introduce them and provide joint and marginal distributions in the asymptotic regime for their empirical estimates.

\subsection{Testing problem}
\label{subsec:testprob}
In many applications such as fMRI \cite{Adali:2007}, DOA estimation \cite{Delmas:2004} or communications \cite{Schreier:2010}\footnote{See \cite{Schreier:2010} and references therein for a larger list of applications involving complex-valued signals and {\color{blue} impropriety} related issues.}, it is common use to model the vector of interest, denoted ${\bf z}$, as being {\em improper} and corrupted by {\em proper} Gaussian noise. Consequently, statistical tests have been proposed to investigate the {\color{blue} impropriety} of a complex vector given a sample (of size $M$ in the sequel), from which one needs to decide:
\begin{align}
\begin{cases}
  H_0: & ${\bf z}$ \textrm{ is {\em proper} {\color{blue}iff} condition \eqref{eq:prop} holds,}\\ 
  H_1: & ${\bf z}$ \textrm{ is {\em improper} otherwise.} 
\end{cases}
\label{eq:test}
\end{align}

In order to design  a statistical test which is invariant under linear transformation, and as explained in \cite{Andersson:1975}, one should use the eigenvalues of the augmented covariance matrix ${\bf C}$ given in \eqref{eq:covmat}.
Before introducing the invariant statistics to be derived from observed complex vectors, we first recall some results about the eigenvalues of real augmented PSD (Positive Semi-Definite) matrices.

\subsection{Invariant parameters}
\label{subsec:invparam}
Let $\mathcal{G}$ be the set of non-singular matrices ${\bf G} \in \mathbb{R}^{2N \times 2N}$ s.t. 
\begin{align*}
{\bf G}= \begin{pmatrix}
{\bf G}_1 & -{\bf G}_2\\{\bf G}_2 & {\bf G}_1
\end{pmatrix},
\end{align*}
where ${\bf G}_1,{\bf G}_2 \in\mathbb{R}^{N \times N}$. Let $\mathcal{S}$
be the set of all $2N \times 2N$ real positive definite symmetric matrices.
According to the test formulation \eqref{eq:test} and condition \eqref{eq:prop},
the null hypothesis $H_0$ is equivalent to ${\bf C} \in \mathcal{T}=\mathcal{S} \cap \mathcal{G}$.

As explained in \cite{Andersson:1975}, $\mathcal{G}$ is a group (isomorphic to the group $GL_N(\mathbb{C})$ of non-singular $N \times N$ complex matrices under the mapping ${\bf G} \leftrightarrow {\bf G}_1 + \bmi {\bf G}_2$), with the matrix multiplication as the group operation. Moreover $\mathcal{G}$ acts transitively on $\mathcal{T}$ under the action $({\bf G},{\bf T}) \in \mathcal{G}\times \mathcal{T} \mapsto {\bf G}{\bf T}{\bf G}^{T} \in \mathcal{T}$. Thus, a parametric characterization of $H_0$ should be invariant to this group action:
the value of the parameters to be tested should be the same for ${\bf C}$ and ${\bf G} {\bf C} {\bf G}^{T}$ for any ${\bf G} \in \mathcal{G}$.

Next, we introduce a decomposition for any  ${\bf C} \in  \mathcal{S}$ that was originally given in \cite{Andersson:1975} and reads:
\begin{equation}
{\bf C}=\dot{{\bf C}} + \ddot{{\bf C}}
\end{equation}
where \begin{align*}
\dot{{\bf C}}&=  \frac{1}{2}\begin{pmatrix}
{\bf C}_{{\bf u}{\bf u}} + {\bf C}_{{\bf v}{\bf v}} & {\bf C}_{{\bf u}{\bf v}} - {\bf C}_{{\bf v}{\bf u}}\\
{\bf C}_{{\bf v}{\bf u}} - {\bf C}_{{\bf u}{\bf v}} & {\bf C}_{{\bf u}{\bf u}} + {\bf C}_{{\bf v}{\bf v}}
\end{pmatrix} \in  \mathcal{G},\\
\ddot{{\bf C}}&= \frac{1}{2} \begin{pmatrix}
{\bf C}_{{\bf u}{\bf u}} - {\bf C}_{{\bf v}{\bf v}} & {\bf C}_{{\bf u}{\bf v}} + {\bf C}_{{\bf v}{\bf u}}\\
{\bf C}_{{\bf u}{\bf v}} + {\bf C}_{{\bf v}{\bf u}} & {\bf C}_{{\bf v}{\bf v}} - {\bf C}_{{\bf u}{\bf u}}
\end{pmatrix}.\\
\end{align*}
Using this decomposition, one can define the following $2N\times 2N$ real symmetric matrix
\begin{align}
{\bf \Gamma}({\bf C}) &= \dot{{\bf C}}^{-\frac{1}{2}} \ddot{{\bf C}} \dot{{\bf C}}^{-\frac{1}{2}}.
\label{eq:Gamma}
\end{align}
It is now possible to give the following lemma about the parametrization of ${\bf C}$.
\begin{lemma}[Invariant parameters \cite{Andersson:1975}] Any matrix ${\bf C} \in  \mathcal{S}$ can be written as:
\begin{align*}
{\bf C}= {\bf G} \begin{pmatrix}
{\bf I}_N + {\bf D}_{\lambda} & 0\\0 & {\bf I}_N - {\bf D}_{\lambda}
\end{pmatrix} {\bf G}^T,
\end{align*}
where ${\bf G} \in \mathcal{G}$, ${\bf I}_N$ is the $N\times N$ identity matrix and
${\bf D}_{\lambda}= \mathrm{diag}(\lambda_1,\ldots,\lambda_N)$ is an $N\times N$ diagonal matrix whose diagonal entries denoted as
$\lambda_n$,  for $1 \le n \le N$, are the non-negative eigenvalues of the $2N \times 2N$ matrix ${\bf \Gamma}({\bf C})$ given in \eqref{eq:Gamma}.
They  satisfy the following  properties:
\begin{enumerate}
 \item $\lambda_n$ and $-\lambda_n$, for $1 \le n \le N$, form the set of eigenvalues of
${\bf \Gamma}({\bf C})$,
 \item $\lambda_n \in [0,1]$  with, by convention, the following ordering $1 \ge \lambda_1 \ge \ldots \ge \lambda_N \ge 0$.
\end{enumerate}
\label{lem:rep}
\end{lemma}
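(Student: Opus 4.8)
The plan is to verify the claimed factorization directly and then read off the two stated properties from the structure of $\bm\Gamma(\bm C)$. First I would note that $\dot{\bm C} \in \mathcal G$ is symmetric positive definite (being the average of two positive definite blocks on the diagonal with a skew-symmetric off-diagonal coupling, it inherits positive definiteness from $\bm C$), so $\dot{\bm C}^{1/2}$ and $\dot{\bm C}^{-1/2}$ exist; moreover, since $\mathcal G$ is a group closed under transposition and under the analytic functional calculus of its symmetric positive-definite elements, $\dot{\bm C}^{1/2} \in \mathcal G$ as well. Writing $\bm G_0 = \dot{\bm C}^{1/2}$, we then have
\begin{align*}
\bm C &= \dot{\bm C} + \ddot{\bm C} = \bm G_0\bigl(\bm I_{2N} + \bm G_0^{-1}\ddot{\bm C}\,\bm G_0^{-T}\bigr)\bm G_0^{T} = \bm G_0\bigl(\bm I_{2N} + \bm\Gamma(\bm C)\bigr)\bm G_0^{T},
\end{align*}
using that $\bm G_0^{T} = \bm G_0$ (symmetry) so that $\bm\Gamma(\bm C) = \dot{\bm C}^{-1/2}\ddot{\bm C}\,\dot{\bm C}^{-1/2}$ matches the middle factor exactly.

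Next I would analyze the symmetric matrix $\bm\Gamma(\bm C)$. The key observation is that $\ddot{\bm C}$ has the anti-$\mathcal G$ block structure $\left(\begin{smallmatrix} \bm A & \bm B \\ \bm B & -\bm A\end{smallmatrix}\right)$ with $\bm A, \bm B$ symmetric, and conjugation by an element of $\mathcal G$ preserves this structure (this is exactly the content of the $\dot{\bm C}/\ddot{\bm C}$ split in \cite{Andersson:1975}, since $\dot{\bm C}^{-1/2}\in\mathcal G$). A matrix of the form $\bm\Gamma = \left(\begin{smallmatrix}\bm P & \bm Q\\ \bm Q & -\bm P\end{smallmatrix}\right)$ with $\bm P,\bm Q$ symmetric anticommutes with $\bm J = \left(\begin{smallmatrix}\bm 0 & \bm I_N\\ -\bm I_N & \bm 0\end{smallmatrix}\right)$, i.e. $\bm J\bm\Gamma\bm J^{-1} = -\bm\Gamma$; hence the spectrum of $\bm\Gamma(\bm C)$ is symmetric about $0$, which gives property~1, and one can collect the nonnegative ones as $\lambda_1,\dots,\lambda_N$. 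There is then an orthogonal $\bm G$-type matrix diagonalizing $\bm\Gamma(\bm C)$ into $\mathrm{diag}(\bm D_\lambda, -\bm D_\lambda)$; absorbing this orthogonal factor into $\bm G_0$ (product of two $\mathcal G$-elements is in $\mathcal G$) yields the stated normal form $\bm C = \bm G\left(\begin{smallmatrix}\bm I_N + \bm D_\lambda & \bm 0\\ \bm 0 & \bm I_N - \bm D_\lambda\end{smallmatrix}\right)\bm G^T$.

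Finally, for property~2 I would argue $\lambda_n \le 1$: since $\bm C$ is positive definite and $\bm G$ is nonsingular, the diagonal matrix $\mathrm{diag}(\bm I_N + \bm D_\lambda, \bm I_N - \bm D_\lambda)$ must be positive definite, forcing $1 - \lambda_n > 0$; combined with $\lambda_n \ge 0$ from the chosen labeling, this gives $\lambda_n \in [0,1)$, and the ordering $1 \ge \lambda_1 \ge \cdots \ge \lambda_N \ge 0$ is merely a convention on the indexing. The main obstacle — and the step deserving the most care — is establishing that the orthogonal diagonalizer of $\bm\Gamma(\bm C)$ can be taken inside $\mathcal G$ (equivalently, that a unitary diagonalization of the associated complex matrix $\bm P + \bm i\bm Q$, interpreted via the anticommutation with $\bm J$, lifts to the real $2N\times 2N$ picture with the right block structure); everything else is bookkeeping on block matrices. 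Since the lemma is quoted from \cite{Andersson:1975}, it would also be legitimate to simply cite that reference for the structural claim and only spell out the positivity argument for $\lambda_n \in [0,1]$.
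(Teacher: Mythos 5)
The paper offers no proof of this lemma at all --- its ``proof'' is a bare pointer to Lemmas 5.1 and 5.2 of \cite{Andersson:1975} --- so your self-contained argument is necessarily a different, and more informative, route; and it is essentially correct. The skeleton is sound: $\dot{\bf C}^{1/2}\in\mathcal G$ because $\mathcal G$ is precisely the set of invertible matrices commuting with ${\bf J}=\left(\begin{smallmatrix}{\bf 0}&-{\bf I}_N\\ {\bf I}_N&{\bf 0}\end{smallmatrix}\right)$ and commutation with ${\bf J}$ passes to the positive-definite square root; the anticommutation ${\bf J}{\bf \Gamma}({\bf C}){\bf J}^{-1}=-{\bf \Gamma}({\bf C})$ gives the $\pm\lambda_n$ pairing of property~1 (with the small addendum that the zero eigenspace is ${\bf J}$-invariant and ${\bf J}^2=-{\bf I}_{2N}$ forces its dimension to be even); and positive definiteness of ${\bf C}$ forces $1-\lambda_n>0$, i.e.\ property~2. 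Two points should be tightened. First, your justification that $\dot{\bf C}$ is positive definite (``PD diagonal blocks with skew off-diagonal coupling'') is not a valid argument on its own; the clean reason is $\dot{\bf C}=\tfrac12\left({\bf C}+{\bf J}{\bf C}{\bf J}^{T}\right)$, an average of two positive definite matrices. Second, the step you rightly single out as the crux --- that the orthogonal diagonalizer of ${\bf \Gamma}({\bf C})$ can be taken inside $\mathcal G$ --- is not a \emph{unitary diagonalization} of ${\bf P}+\bmi{\bf Q}$: that matrix is complex symmetric, hence generally not normal, and the relevant tool is the Takagi/Autonne factorization ${\bf P}+\bmi{\bf Q}={\bf U}{\bf D}_\lambda{\bf U}^{T}$ with ${\bf U}$ unitary and ${\bf D}_\lambda\ge 0$; translating ${\bf U}$ back to a real orthogonal element of $\mathcal G$ produces exactly $\mathrm{diag}({\bf D}_\lambda,-{\bf D}_\lambda)$ and identifies the $\lambda_n$ as the singular values of ${\bf P}+\bmi{\bf Q}$. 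With these two repairs your argument is complete and reconstructs what \cite{Andersson:1975} proves; falling back on the citation, as the paper does, is of course also legitimate.
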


\begin{proof}
See \cite[lemma 5.1 and 5.2]{Andersson:1975}.
\end{proof}

Lemma \ref{lem:rep} shows that any invariant parameterization of the covariance matrix ${\bf C}$ for the group action
of $\mathcal{G}$ depends only on the $N$ (non-negative) eigenvalues $1 \ge \lambda_1 \ge \ldots \ge \lambda_N \ge 0$
of ${\bf \Gamma}({\bf C})$. Thus these eigenvalues are termed {\em maximal invariant parameters} \cite[Chapter 6]{Lehmann:2005}. Moreover under the null hypothesis $H_0$, one has that $\lambda_1 = \ldots = \lambda_N=0$ as $\ddot{{\bf C}}$ reduces to the zero matrix according to \eqref{eq:prop}. Within the invariant parameterization, the testing problem in \eqref{eq:test} becomes:
\begin{align}
\begin{cases}
  H_0:& \lambda_1 = 0,\\
  H_1: & \lambda_1 > 0,
\end{cases}
\label{eq:testinv}
\end{align}
where the alternative hypothesis $H_1$
means that there exists at least one positive eigenvalue.
Note that the invariance property ensures that the test does not depend on the (common) representation basis of the real and imaginary parts of ${\bf z}$, {\em i.e.} vectors $\bf u$ and $\bf v$. Also, $\lambda_n$, the eigenvalues of $\Gamma({\bf C})$, are directly related to the ones obtained using the {\em complex augmented representation}, {\em i.e.} based on the complex covariance matrix:
\begin{align*}
 {\mathbb E}[\tilde{\bf z}\tilde{\bf z}^{\dagger}]
 &=
\begin{pmatrix}
{\bf C}_{{\bf z}{\bf z}} & {\bf C}_{{\bf z}{\bf z}^*} \\
{\bf C}_{{\bf z}^*{\bf z}} & {\bf C}_{{\bf z}^*{\bf z}^*}\\
\end{pmatrix},
\end{align*}
where $.^{\dagger}$ stands for transposition and conjugation,
and ${\bf C}_{{\bf r}{\bf s}}= {\mathbb E}[ {\bf r} {\bf s}^{\dagger}]$ denotes the (complex) cross-covariance between the sized $N$ complex vectors ${\bf r}$ and ${\bf s}$. In fact, as detailed in  \cite[Chapter 3]{Schreier:2010}  and \cite{Walden:2009, hellings2019},
the eigenvalues $\lambda_n$, for $1\le n\le N$, are also the square roots of the eigenvalues of the following $N \times N$ complex matrix:
\begin{align}
 {\bf C}_{{\bf z}^*{\bf z}^*}^{-1}  {\bf C}_{{\bf z}^*{\bf z}} {\bf C}_{{\bf z} {\bf z}}^{-1} {\bf C}_{{\bf z} {\bf z}^*}
\label{eq:CCA}
 \end{align}
Matrix \eqref{eq:CCA} corresponds to the usual population canonical correlation matrix to derive the canonical variables between two vectors,
here the complex ones ${\bf z}$ and ${\bf z}^*$.
{\color{blue} Hence the squared eigenvalues $\lambda_n^2$ are commonly referred to as the canonical correlation coefficients. In our specific framework of impropriety testing, these coefficients are also known as {\em circularity coefficients} \cite{Eriksson:2006,Moreau:2013}, or
{\em impropriety coefficients} \cite{hellings2019}.
In the sequel, to better emphasize the statistical differences between our impropriety testing problem with the problem of testing the independence between two vectors\footnote{\color{blue}Note again that even under $H_0$ where the ${\bf z}$ and ${\bf z}^*$ are uncorrelated, {\em i.e.} $\lambda_1= \cdots =\lambda_N=0$,  they cannot be independent since they are deduced from one another in a deterministic way.
The probability measure of the impropriety test statistics will therefore be different from that of the independent case.}, we will make use of the name {\em  population impropriety coefficients} for the eigenvalues $\lambda_n$. The sample version of these eigenvalues (derived from the sample covariances) will be will be denoted as $l_n$ and $r_n \equiv l_n^2$ for their squared values,  and referred to as the {\em sample  impropriety coefficients} and the  {\em sample squared  impropriety coefficients} respectively.}


\subsection{Invariant statistics}
\label{subsec:statistics}
Consider a sample of size $M$, denoted ${\bf X}=\left\{ {\bf x}_m \right\}_{m=1}^M$, where $ {\bf x}_m= [{\bf u}^T_m , {\bf v}^T_m]^T$ are  $2N$-dimensional i.i.d. Gaussian real vectors  with zero mean and covariance matrix ${\bf C}$.
In the Gaussian framework, a sufficient statistics is given by the $2N \times 2N$ sample covariance matrix:
\begin{align}
{\bf S}&= \begin{pmatrix}
{\bf S}_{{\bf u}{\bf u}} & {\bf S}_{{\bf u}{\bf v}} \\
{\bf S}_{{\bf v}{\bf u}} & {\bf S}_{{\bf v}{\bf v}}\\
\end{pmatrix},
\end{align}
with ${\bf S}_{{\bf a}{\bf b}} \in {\mathbb R}^{N \times N}$ the real-valued sample (cross)covariance matrix of real vectors $\left\{ {\bf a}_m \right\}_{m=1}^M$ and $\left\{ {\bf b}_m \right\}_{m=1}^M$ such that:
\begin{equation}
{\bf S}_{{\bf a}{\bf b}}= \frac{1}{M} \sum_{m=1}^M {\bf a}_m{\bf b}_m^T.
\end{equation}
We assume here that $M \ge 2N$, thus ${\bf S}$ belongs to the
real symmetric positive definite matrices set $\mathcal{S}$.
According to the previous section, since $H_0$ is invariant under the action of the group $\mathcal{G}$, an invariant test statistic must only depend on the $N$ non-negative eigenvalues $l_n$, $1\le n \le N$,  of
\begin{align}
 {\bf \Gamma}({\bf S})= \dot{{\bf S}}^{-\frac{1}{2}} \ddot{{\bf S}} \dot{{\bf S}}^{-\frac{1}{2}}.
 \label{eq:GammaS}
\end{align}
These {\em sample impropriety coefficients} obey $1 \ge l_1 \ge \ldots \ge l_N \ge 0$ according to Lemma \ref{lem:rep}, and are an estimate of the population impropriety coefficients $\lambda_n$
obtained from the population covariance ${\bf C}$. Note that all the $\lambda_n$ are zero under the null hypothesis $H_0$, and at least one is non-negative otherwise. As a consequence, the distribution of the $l_n$ should be stochastically greater under $H_1$ than under $H_0$. All invariant test can be derived from this property. A key point to derive now a tractable statistical test procedure is to characterize the null distribution of these sample impropriety coefficients.


\subsection{Eigenvalue distribution under $H_0$}

Let $\mathcal{B}_N(\frac{1}{2} n_1,\frac{1}{2}  n_2)$ denote the $N \times N$-dimensional matrix variate beta distribution
with parameters $n_1$ and $n_2$ as defined for instance in \cite[definition 3.3.2, p. 110]{Muirhead:2005}. It is possible to obtain, under $H_0$, the joint probability density function (pdf) of the squared eigenvalues of $\Gamma({\bf S})$ in terms of this matrix-variate beta distribution.
\begin{proposition}[Joint distribution of impropriety coefficients]
Under $H_0$, the vector $(r_1,\ldots,r_N)$ of the sample squared impropriety coefficients $r_n \equiv l_n^2$, for $1\le n \le N$, is distributed as the eigenvalues
of the matrix-variate beta distribution $\mathcal{B}_N(\frac{1}{2} n_1,\frac{1}{2}  n_2)$, with parameters $n_1=N+1$ and $n_2= M-N$. Moreover,
the joint pdf of $(r_1,\ldots,r_N)$ is expressed as:
\begin{align}
p(r_1, \ldots,r_N) \propto \prod_{n=1}^N  (1- r_n)^{(M-2N-1)/2} \prod_{k<n}^N (r_k - r_n),
\label{eq:pdfBeta}
\end{align}
where $1\ge r_1 \ge \ldots \ge r_N \ge 0$.
\label{prop:pdfBeta}
\end{proposition}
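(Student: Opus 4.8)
The plan is to use the group invariance to reduce $H_0$ to a canonical null model, and then to obtain the joint law of $(r_1,\dots,r_N)$ through two explicit changes of variables, the last of which is a Takagi factorisation of a complex symmetric matrix.

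\emph{Standardisation.} The vector $(l_1,\dots,l_N)$, hence $(r_1,\dots,r_N)$, depends on the data only through ${\bf S}$, and by Lemma~\ref{lem:rep} the eigenvalues of ${\bf \Gamma}({\bf M})$ coincide for ${\bf M}$ and ${\bf G}{\bf M}{\bf G}^T$ whenever ${\bf M}\in\mathcal{S}$ and ${\bf G}\in\mathcal{G}$. Using transitivity of $\mathcal{G}$ on $\mathcal{T}$, under $H_0$ I would write ${\bf C}={\bf G}{\bf G}^T$ with ${\bf G}\in\mathcal{G}$ and replace ${\bf x}_m$ by ${\bf G}^{-1}{\bf x}_m$; it then suffices to treat ${\bf C}={\bf I}_{2N}$, i.e.\ ${\bf u}_m$ and ${\bf v}_m$ independent $\mathcal{N}({\bf 0},{\bf I}_N)$. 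In that case ${\bf A}:=M{\bf S}$ is Wishart $W_{2N}(M,{\bf I}_{2N})$, with density on $\mathcal{S}$ proportional to $|{\bf A}|^{(M-2N-1)/2}\exp(-\tfrac12\Tr{\bf A})$, and ${\bf \Gamma}({\bf A})={\bf \Gamma}({\bf S})$ since ${\bf \Gamma}(\cdot)$ is invariant under positive scaling.

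\emph{Reduction to a complex symmetric contraction.} First decompose ${\bf A}=\dot{\bf A}+\ddot{\bf A}$; this splitting is orthogonal for the trace inner product, hence has unit Jacobian, and one has $\Tr{\bf A}=\Tr\dot{\bf A}$ (because $\Tr\ddot{\bf A}=0$) and $|{\bf A}|=|\dot{\bf A}|\prod_{n=1}^N(1-r_n)$, the latter from ${\bf A}=\dot{\bf A}^{1/2}({\bf I}+{\bf \Gamma}({\bf A}))\dot{\bf A}^{1/2}$ and the eigenvalues $\pm l_n$ of ${\bf \Gamma}({\bf A})$. Next change $\ddot{\bf A}\mapsto{\bf \Gamma}=\dot{\bf A}^{-1/2}\ddot{\bf A}\dot{\bf A}^{-1/2}$ with $\dot{\bf A}$ held fixed; this is linear (mapping the space to which $\ddot{\bf A}$ belongs onto itself, as $\dot{\bf A}^{-1/2}\in\mathcal{G}$), so its Jacobian depends on $\dot{\bf A}$ only. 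Consequently the joint density of $(\dot{\bf A},{\bf \Gamma})$ factorises, $\dot{\bf A}$ and ${\bf \Gamma}$ are independent, and ${\bf \Gamma}$ has density proportional to $\prod_n(1-r_n)^{(M-2N-1)/2}$ on its support — which in passing re-derives the independence of the maximal invariant from $\dot{\bf S}$. Passing through the unitary map that links the real and complex augmented representations, ${\bf \Gamma}$ corresponds to a block anti-diagonal matrix whose off-diagonal block ${\bf K}$ is an arbitrary $N\times N$ complex \emph{symmetric} matrix with $\|{\bf K}\|<1$ and singular values $l_n$; since $\prod_n(1-r_n)=\det({\bf I}-{\bf K}{\bf K}^\dagger)$, the law of ${\bf K}$ has density proportional to $\det({\bf I}-{\bf K}{\bf K}^\dagger)^{(M-2N-1)/2}$.

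\emph{Takagi factorisation and conclusion.} Writing ${\bf K}={\bf O}\,\mathrm{diag}(r_1^{1/2},\dots,r_N^{1/2})\,{\bf O}^T$ with ${\bf O}$ unitary, I would substitute the volume element of this factorisation and integrate out ${\bf O}$; the Jacobian contributes $\prod_{k<n}(r_k-r_n)$ (exponent one, as for real symmetric matrices), so the marginal density of $(r_1,\dots,r_N)$ is proportional to $\prod_{n=1}^N(1-r_n)^{(M-2N-1)/2}\prod_{k<n}(r_k-r_n)$ on $1\ge r_1\ge\dots\ge r_N\ge0$, which is \eqref{eq:pdfBeta}. Comparing with the eigenvalue density of $\mathcal{B}_N(\tfrac12 n_1,\tfrac12 n_2)$, which is proportional to $\prod_n r_n^{(n_1-N-1)/2}(1-r_n)^{(n_2-N-1)/2}\prod_{k<n}(r_k-r_n)$ \cite{Muirhead:2005}, identifies $n_1=N+1$ and $n_2=M-N$, which proves the proposition.

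\emph{Main obstacle.} The delicate step is the last one: pinning down the range of ${\bf \Gamma}$ as the set of complex symmetric contractions, and evaluating the volume element of the Takagi factorisation of a complex symmetric matrix. This is precisely where the specificity of the impropriety problem surfaces: because ${\bf z}$ and ${\bf z}^*$ are conjugates, the pertinent matrix ${\bf K}$ is complex \emph{symmetric} rather than a generic complex matrix, and this turns the limiting ensemble into the \emph{real} matrix-variate beta, with the shifted first parameter $n_1=N+1$ and Vandermonde exponent one, rather than the complex matrix-variate beta (Vandermonde exponent two) that a genuine canonical correlation analysis between two independent complex Gaussian vectors would produce. An equivalent route writes ${\bf Z}=({\bf Z}{\bf Z}^\dagger)^{1/2}{\bf E}$ with ${\bf E}$ Haar-distributed on the complex Stiefel manifold and independent of ${\bf Z}{\bf Z}^\dagger$, so that $(r_1,\dots,r_N)$ becomes the squared singular values of ${\bf E}{\bf E}^T$ — the leading $N\times N$ block of a matrix from the Circular Orthogonal Ensemble — and then invokes known truncation results; the essential difficulty, the symmetric-matrix Jacobian, is the same.
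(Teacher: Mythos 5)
Your argument is correct, but it takes a genuinely different — and far more self-contained — route than the paper, whose proof is essentially citational: it quotes the explicit joint density of $(l_1,\ldots,l_N)$ derived in \cite{Andersson:1975} (pp.\ 39--41), performs the change of variables $r_n=l_n^2$, and then matches the result with the eigenvalue density of $\mathcal{B}_N(\tfrac{n_1}{2},\tfrac{n_2}{2})$ from \cite[Thm.\ 3.3.4]{Muirhead:2005} to read off $n_1=N+1$ and $n_2=M-N$. You instead re-derive the density from first principles: reduction to ${\bf C}={\bf I}_{2N}$ by $\mathcal{G}$-invariance, the trace-orthogonal splitting of the Wishart matrix into $\dot{\bf A}+\ddot{\bf A}$ (which yields, as a by-product, the independence of $\dot{\bf S}$ and ${\bf \Gamma}({\bf S})$, a fact the paper never states), and the Takagi factorisation of the complex symmetric block ${\bf K}$. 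Your key claims check out: the map $\ddot{\bf A}\mapsto\dot{\bf A}^{-1/2}\ddot{\bf A}\dot{\bf A}^{-1/2}$ does preserve the relevant subspace because $\dot{\bf A}^{-1/2}\in\mathcal{G}$ and conjugation by $\mathcal{G}$ commutes with the $\dot{\phantom{A}}/\ddot{\phantom{A}}$ projections; and the Takagi volume element is indeed $\prod_n\sigma_n\prod_{k<n}\vert\sigma_k^2-\sigma_n^2\vert\,\mathrm{d}\sigma\,\mathrm{d}{\bf O}$, which after $r_n=\sigma_n^2$ gives exactly the exponent-one Vandermonde $\prod_{k<n}(r_k-r_n)$. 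That Jacobian is the real content of the proof and the one computation you left as an assertion; to make the argument fully rigorous you would need to write it out (a short computation with $\boldsymbol{\delta}={\bf O}^\dagger\mathrm{d}{\bf O}$) or cite it from the literature on truncated circular orthogonal ensembles. The trade-off is clear: the paper's proof is two lines but opaque, whereas yours explains structurally where both the $(1-r_n)^{(M-2N-1)/2}$ factor and the $\beta=1$ repulsion come from — in particular why impropriety testing lands on the \emph{real} matrix-variate beta with the $+1$ offset rather than the complex one, which is precisely the point emphasized in Section~\ref{sec:discuss}.
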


\begin{proof}
As shown in \cite[pp. 39-41]{Andersson:1975}, the sample eigenvalue vector $(l_1,\ldots,l_N)$ is characterized by the following pdf:
\begin{align*}
p(l_1, \ldots,l_N) \propto \prod_{n=1}^N (2 l_n) (1- l_n^2)^{(M-2N-1)/2} \prod_{k<n}^N (l_k^2 - l_n^2).
\end{align*}
A simple change of variables yields the pdf of  $(r_1,\ldots,r_N)$ given in \eqref{eq:pdfBeta}.
Moreover, according to \cite[Theorem 3.3.4, p. 112]{Muirhead:2005}, \eqref{eq:pdfBeta} is the pdf of the eigenvalues of the matrix variate beta distribution $\mathcal{B}_N(\frac{N+1}{2} ,\frac{M-N}{2} )$, which concludes the proof.
\end{proof}

It is interesting to note that the pdf given in Proposition \ref{prop:pdfBeta} is close to what would be obtained if one would perform a canonical correlation analysis on ${\bf z}$ and ${\bf z}^*$ considered as $N$-dimensional real Gaussian independent vectors. Here vectors ${\bf z}$ and ${\bf z}^*$ are actually complex valued and fully dependent. This is further discussed in Section \ref{sec:discuss}.

Expression \eqref{eq:pdfBeta} gives, under the $H_0$ hypothesis, the joint distribution of the squared sample impropriety coefficients $(r_1,\ldots,r_N)$. In the general case, obtaining an analytic expression of marginal distributions of individual eigenvalues is a complicated task. However, in the asymptotic regime, {\em i.e.} when the dimension $N$ and the number of samples $M$ go to infinity while their ratio stays commensurable,
one can obtain those marginal laws. The following theorem gives
the distribution of one {\color{blue} (unordered)} sample impropriety coefficient in this regime.

\begin{theorem}[Limiting empirical distribution]
As $M,\,N\, \rightarrow  \, \infty$ with the ratio
$M/N \, \rightarrow \, \gamma \in [2,+\infty)$ being finite, the marginal empirical distribution of the {\color{blue} unordered} sample squared impropriety coefficients ({\em i.e.} the squared eigenvalues of ${\bf \Gamma}({\bf S})$) converges, under the $H_0$ hypothesis, to the probability measure with density:
\begin{align}
f(r)&=  \frac{1} { 2 \pi (1-r)  }   \sqrt{ 4 (\gamma-1) \tfrac{1-r}{r} - (\gamma-2)^2  },
\label{eq:pdfasympt}
\end{align}
on its support  $r \in (0, c)$, with $c= \tfrac{4(\gamma-1)}{\gamma^2} \in (0,1]$.
\label{thm:pdfev}
\end{theorem}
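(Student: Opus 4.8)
The plan is to recognise the joint law in Proposition~\ref{prop:pdfBeta} as that of a classical MANOVA (matrix-variate beta / Jacobi) ensemble and then transport the known limiting spectral distribution of such ensembles through an explicit change of variables. By the Wishart construction of the matrix-variate beta distribution, under $H_0$ the vector $(r_1,\ldots,r_N)$ has the same law as the eigenvalues of ${\bf S}_1({\bf S}_1+{\bf S}_2)^{-1}$, where ${\bf S}_1\sim W_N(N+1,{\bf I}_N)$ and ${\bf S}_2\sim W_N(M-N,{\bf I}_N)$ are independent real Wishart matrices, the degrees of freedom $n_1=N+1$, $n_2=M-N$ being those identified in Proposition~\ref{prop:pdfBeta} (note that $n_2\ge N$ is exactly the standing assumption $M\ge 2N$, and $\gamma=2$ is its boundary). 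Applying the bijection $r\mapsto g=r/(1-r)$, the $g_n$ become the eigenvalues of the multivariate $F$-type matrix ${\bf S}_2^{-1}{\bf S}_1$; after normalising each Wishart by its own degrees of freedom this reads $g_n=\tfrac{N+1}{M-N}\times(\text{eigenvalues of }\bar{\bf S}_2^{-1}\bar{\bf S}_1)$, with $\bar{\bf S}_1=\tfrac1{N+1}{\bf S}_1$ and $\bar{\bf S}_2=\tfrac1{M-N}{\bf S}_2$ having Marchenko--Pastur limits of aspect ratios $y_1=N/(N+1)\to1$ and $y_2=N/(M-N)\to t:=1/(\gamma-1)$.

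I would then invoke the well-known almost-sure limit of the empirical spectral distribution of the multivariate $F$-matrix (Wachter; Silverstein; see also Bai--Silverstein): for $y_2<1$ it is compactly supported on $[a,b]$ with density proportional to $\sqrt{(b-w)(w-a)}/\big(w\,(y_1+y_2 w)\big)$, the edges being controlled by $h=\sqrt{y_1+y_2-y_1y_2}$. In our regime $y_1\to1$ forces $h\to1$, hence $a\to0$ (a hard edge at the origin, consistent with the absence of any power of $r_n$ in \eqref{eq:pdfBeta}, i.e. with $n_1=N+1$ being one above the minimal admissible value) and $b\to 4/(1-t)^2$. Substituting $t=1/(\gamma-1)$, undoing the scaling $w=(\gamma-1)g$, and finally pushing forward through $r=g/(1+g)$ is then routine; the one identity worth recording is $(\gamma-2)^2+4(\gamma-1)=\gamma^2$, which turns the upper edge into $c=4(\gamma-1)/\gamma^2$ and, together with $1+g=1/(1-r)$ and $1/g=(1-r)/r$, collapses the $F$-density into exactly \eqref{eq:pdfasympt}. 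One checks in passing that the resulting measure is a bona fide probability measure (no atom at $0$, since $y_1<1$ for every finite $N$).

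The endpoint $\gamma=2$, i.e. $y_2\to1$, lies outside the classical $F$-matrix statements ($b\to\infty$, support unbounded in $g$), so I would treat it separately: \eqref{eq:pdfasympt} there reduces to the arcsine law $1/\big(\pi\sqrt{r(1-r)}\big)$ on $(0,1)$, the standard limiting spectrum of the Jacobi ensemble with both aspect ratios tending to $1$; alternatively the whole case $\gamma\ge2$ follows from it by weak continuity of the limiting family in $\gamma$. A more self-contained alternative avoids the $F$-matrix entirely: read \eqref{eq:pdfBeta} as a $\beta=1$ Coulomb gas on $[0,1]$ with confining potential $-(\gamma-2)\log(1-r)$ and a hard wall at $r=0$, and identify $f$ as the corresponding equilibrium measure by solving the resulting scalar equation for its Stieltjes transform.

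The main obstacle I anticipate is precisely these boundary phenomena: the ``square'' Wishart factor $y_1\to1$, which sits at the edge of the off-the-shelf $F$-matrix theorems and is responsible for the inverse-square-root singularity of $f$ at $r=0$, and the endpoint $\gamma=2$; one must also be careful to state the convergence in the intended sense, namely weak convergence of the empirical spectral measure $\tfrac1N\sum_{n=1}^N\delta_{r_n}$ (almost surely, or at least in probability). Everything else is bookkeeping: the algebraic identity above and the two elementary changes of variables.
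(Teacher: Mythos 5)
Your proposal follows essentially the same route as the paper's proof in Appendix~\ref{app:pdf}: identify the joint law from Proposition~\ref{prop:pdfBeta} with the eigenvalues of ${\bf A}({\bf A}+{\bf B})^{-1}$ for independent Wisharts via Muirhead's construction, pass to the multivariate $F$-matrix, invoke the Wachter--Silverstein limiting spectral density with aspect ratios $d=\lim N/(N+1)=1$ and $d'=\lim N/(M-N)=1/(\gamma-1)$, and push forward through $r=d'x/(d+d'x)$. Your treatment is in fact slightly more careful than the paper's on the boundary issues (the hard edge from $d\to 1$ and the endpoint $\gamma=2$), which the paper handles implicitly by allowing $d\in(0,1]$ in its statement of the Silverstein result.
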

\begin{proof}
See Appendix \ref{app:pdf}.
\end{proof}

\begin{Corollary}[Moments]
The mean and variance of the limiting distribution under $H_0$  of a sample squared impropriety coefficients are expressed
respectively as ${1}/{\gamma}$ and
${(\gamma-1)}/{\gamma^3}$.
\end{Corollary}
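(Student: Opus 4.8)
The plan is to compute the moments directly from the limiting density \eqref{eq:pdfasympt}. First I would rewrite that density in a more transparent form: since
$$4(\gamma-1)\,\tfrac{1-r}{r} - (\gamma-2)^2 \;=\; \tfrac{4(\gamma-1) - \gamma^2 r}{r} \;=\; \tfrac{\gamma^2}{r}\,(c-r), \qquad c=\tfrac{4(\gamma-1)}{\gamma^2},$$
we obtain
$$f(r) \;=\; \frac{\gamma}{2\pi}\,\frac{1}{1-r}\sqrt{\frac{c-r}{r}} \;=\; \frac{\gamma}{2\pi}\,\frac{\sqrt{r(c-r)}}{r(1-r)}, \qquad r\in(0,c),$$
an arcsine-type (Wachter) density on $[0,c]$ with a pole at $r=1$ that is harmless since $c\le 1$. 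Setting $I_k=\int_0^c r^k f(r)\,dr$, the statement is equivalent to $I_1 = 1/\gamma$ and $I_2 = (2\gamma-1)/\gamma^3$ (the variance being $I_2-I_1^2$); I would also check $I_0=1$ as a consistency test.

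Next I would evaluate the $I_k$ via the substitution $r=c\sin^2\phi$, $\phi\in(0,\pi/2)$, under which $\sqrt{(c-r)/r}\,\mathrm{d}r = 2c\cos^2\phi\,\mathrm{d}\phi$ and $1-r = 1-c\sin^2\phi$, so that
$$I_k \;=\; \frac{\gamma\,c^{\,k+1}}{\pi}\int_0^{\pi/2}\frac{\sin^{2k}\!\phi\;\cos^2\!\phi}{1-c\sin^2\!\phi}\,\mathrm{d}\phi.$$
Each of these elementary integrals is reduced by applying the identity $\sin^2\phi = \tfrac1c\bigl(1-(1-c\sin^2\phi)\bigr)$ a total of $k$ times, which expresses the integrand as a linear combination of $\tfrac{\cos^2\phi}{1-c\sin^2\phi}$ and of $\cos^2\phi\,(1-c\sin^2\phi)^j$ for $j=0,\dots,k-1$. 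The first term integrates through the classical $\int_0^{\pi/2}\tfrac{\mathrm{d}\phi}{1-c\sin^2\phi}=\tfrac{\pi}{2\sqrt{1-c}}$ together with $\cos^2\phi=(1-\tfrac1c)+\tfrac1c(1-c\sin^2\phi)$, and the remaining terms are Wallis integrals. Finally I would substitute $1-c=(\gamma-2)^2/\gamma^2$, hence $\sqrt{1-c}=(\gamma-2)/\gamma$ since $\gamma\ge 2$, together with $c=4(\gamma-1)/\gamma^2$, and simplify to reach $I_1=1/\gamma$, $I_2=(2\gamma-1)/\gamma^3$, and thus $\var = (2\gamma-1)/\gamma^3 - 1/\gamma^2 = (\gamma-1)/\gamma^3$.

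There is no conceptual obstacle here: the only real work is keeping the algebra in $\gamma$ under control through the partial-fraction reductions, and handling the boundary value $\gamma=2$ (where $c=1$ and $f$ degenerates to the arcsine law on $[0,1]$) by continuity — note that the apparently singular contribution $(1-\tfrac1c)\tfrac{1}{\sqrt{1-c}} = -\tfrac{\sqrt{1-c}}{c}$ in fact vanishes as $c\to1$. An equivalent and slightly slicker route is to read all the $I_k$ off the Laurent expansion at infinity of the Cauchy transform of $f$ together with its residue at the pole $z=1$, obtained by deforming a contour around the cut $[0,c]$; I would expect either approach to take comparable effort.
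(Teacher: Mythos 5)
Your computation is correct and follows the same route as the paper, which simply states that the moments are "derived directly from the pdf by symbolic computation"; you have carried out that integration explicitly via the substitution $r=c\sin^2\phi$, and your intermediate values $I_1=1/\gamma$ and $I_2=(2\gamma-1)/\gamma^3$ check out, yielding the stated variance $(\gamma-1)/\gamma^3$. The only difference is that you supply the details the paper delegates to a computer algebra system.
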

\begin{proof}
Expressions of these limiting moments can be derived directly from the pdf \eqref{eq:pdfasympt}
by symbolic computation.
\end{proof}

A few remarks are in order:
\begin{itemize}
 \item When $\gamma \rightarrow +\infty$, the expression of the mean and variance emphasizes that the sample impropriety coefficients converge to zero, which are the population values under $H_0$. This is the usual behavior in small dimension when $N$
 is fixed while $M$ tends to infinity.
 \item Conversely, in the special case where $\gamma=2$, the asymptotic null distribution of the squared sample impropriety coefficients is $\mathcal{B}(\frac{1}{2} ,\frac{1}{2})$, known as the arcsine law. In this limiting case,
the sample impropriety coefficients are symmetrically distributed on $[0,1]$
around $1/2$  with two symmetric modes at the edges (even if the population
impropriety coefficients are zero).
\end{itemize}


\section{Testing for {\color{blue} impropriety}}
\label{sec:testing}
In this section, we make use of the results from Proposition \ref{prop:pdfBeta} and Theorem \ref{thm:pdfev} to introduce the asymptotic behavior of two {\color{blue} impropriety} tests: the classical GLRT and Roy's test (based on the largest eigenvalue of the ${\bf \Gamma}({\bf S})$ matrix).
\subsection{GLRT}
\label{subsec:glrt}
\subsubsection{Expression of the GLRT statistic}
A very classical procedure to test for {\color{blue} impropriety} is obtained from the Generalized Likelihood Ratio Test (GLRT) statistic defined as:
\begin{align*}
T \propto \frac{\stackrel[{\bf C} \textrm{ s.t. $H_0$}]{\sup} \quad p({\bf X}\, ; \, {\bf C}  )}{\stackrel[ {\bf C} \textrm{ s.t.   $H_1$}]{\sup}\quad p( {\bf X}\, ; \,{\bf C} )},
\end{align*}
where $p({\bf X}\, ; \,{\bf C} )$ is the multivariate normal pdf of the sample ${\bf X}$ composed of $M$  i.i.d. $2N$-dimension real Gaussian vectors  with zero mean and covariance matrix ${\bf C}$. Under $H_1$, ${\bf C} \in \mathcal{S}$ is a symmetric definite positive matrix. Its maximum likelihood (ML) estimate is the sample covariance ${\bf S}$. Under $H_0$, one has that ${\bf C}= \dot{{\bf C}}$ so that ${\bf C} \in \mathcal{T}$. Then the ML estimate of ${\bf C}$ under $H_0$ reduces to $\dot{{\bf S}}$, as shown for instance in \cite{Andersson:1975}. The testing problem in \eqref{eq:test} can thus be rephrased:
\begin{align}
\begin{cases}
  H_0:& {\bf C} \in \mathcal{T},\\
  H_1: & {\bf C} \in \mathcal{S}.
\end{cases}
\label{eq:testgroup}
\end{align}
Actually, the GLRT statistic is expressed as:
\begin{align}
T &= |{\bf S} | / | \dot{{\bf S}} |, \nonumber \\
&=  |\dot{{\bf S}}^{\frac{1}{2}} \left( {\bf I}_{2N} +   {\bf \Gamma}({\bf S}) \right) \dot{{\bf S}}^{\frac{1}{2}} | / |\dot{{\bf S}}|
=\left| {\bf I}_{2N} + \Gamma({\bf S}) \right|, \nonumber  \\
&= \prod_{n=1}^N (1 + l_n)(1-l_n) = \prod_{n=1}^N (1 - r_n),
\label{eq:GLRT}
\end{align}
where the first line is due to the Gaussian pdf expression; the second line
comes from the decomposition
${\bf S}=\dot{{\bf S}} + \ddot{{\bf S}}$ and the  expression \eqref{eq:GammaS} of ${\bf \Gamma}({\bf S})$; the third line
comes from Lemma \ref{lem:rep}, where $r_n= l_n^2$, $1 \le n \le N$, are the sample squared  impropriety coefficients. As explained previously, it is important to note that the GLRT is invariant:
the resulting statistics given in \eqref{eq:GLRT} only depends on the eigenvalues of ${\bf \Gamma}({\bf S})$.

\subsubsection{Distribution under the  hypothesis $H_0$}

Let $\Lambda(d,m,n)$ denote Wilks lambda distribution, with dimension parameter $d$ and degrees of freedom parameters $m$ and $n$, as defined for instance in \cite[definition 3.7.1, p. 81]{Mardia:1979}.
\begin{theorem} The GLRT statistics $T$ given in \eqref{eq:GLRT} is
distributed under $H_0$ as the following Wilks lambda distribution:
\begin{align*}
T \sim \Lambda(N,M-N,N+1).
\end{align*}
Moreover this statistics can be expressed under $H_0$ as:
\begin{align}
T= \prod_{n=1}^N u_n,
\label{eq:Tbeta}
\end{align}
where the $u_n$ are independent beta-distributed random variables such that
$u_n \sim \mathcal{B}\left( \frac{M-N-n+1}{2}, \frac{N+1}{2} \right)$, for $1\le n \le N$.
\label{thm:GLRT}
\end{theorem}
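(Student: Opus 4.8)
The plan is to recognize the joint density of the sample squared impropriety coefficients, already obtained in Proposition~\ref{prop:pdfBeta}, as the density of the eigenvalues of a matrix-variate beta distribution, and then invoke a standard multivariate-analysis identity that expresses such a product of $(1-r_n)$ — equivalently the determinant of the complementary matrix-variate beta matrix — as a Wilks lambda, together with its decomposition into independent univariate betas. Concretely, from \eqref{eq:GLRT} we have $T=\prod_{n=1}^N(1-r_n)$, and $1-r_n$ ranges over the eigenvalues of ${\bf I}_N-{\bf R}$ where ${\bf R}\sim\mathcal{B}_N(\tfrac{N+1}{2},\tfrac{M-N}{2})$ by Proposition~\ref{prop:pdfBeta}. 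Hence $T=|{\bf I}_N-{\bf R}|$. The matrix $\mathcal{B}_N(a,b)$ can be realized as ${\bf R}=({\bf A}+{\bf B})^{-1/2}{\bf A}({\bf A}+{\bf B})^{-1/2}$ with ${\bf A}\sim W_N(2a,{\bf I})$, ${\bf B}\sim W_N(2b,{\bf I})$ independent Wisharts; then ${\bf I}_N-{\bf R}$ is itself $\mathcal{B}_N(b,a)$ in distribution, so $T=|{\bf I}_N-{\bf R}|$ has the distribution of $|({\bf A}+{\bf B})^{-1}{\bf B}|$ with ${\bf B}\sim W_N(M-N,{\bf I})$, ${\bf A}\sim W_N(N+1,{\bf I})$. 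This is by definition the Wilks lambda $\Lambda(N,M-N,N+1)$, which establishes the first claim.

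For the second claim I would use the standard factorization of Wilks lambda into a product of independent univariate beta variables: if $T\sim\Lambda(d,m,n)$ then $T\stackrel{d}{=}\prod_{i=1}^d u_i$ with $u_i\sim\mathcal{B}\!\left(\tfrac{m-i+1}{2},\tfrac{n}{2}\right)$ independent; see, e.g., the references already cited \cite{Mardia:1979,Muirhead:2005}. Applying this with $d=N$, $m=M-N$, $n=N+1$ gives $u_n\sim\mathcal{B}\!\left(\tfrac{M-N-n+1}{2},\tfrac{N+1}{2}\right)$ for $1\le n\le N$, which is exactly \eqref{eq:Tbeta}. An alternative self-contained route, should a direct citation be deemed insufficient, is to integrate the joint density \eqref{eq:pdfBeta} of Proposition~\ref{prop:pdfBeta} against the symmetric function $\prod_n(1-r_n)^t$ to compute the Mellin transform $\mathbb{E}[T^t]$, recognize it as a product of ratios of Gamma functions, and match it term by term with $\prod_{n=1}^N\mathbb{E}[u_n^t]$ using the Mellin transform of the beta distribution; uniqueness of the Mellin transform on $(0,1]$ then yields the equality in distribution.

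The main obstacle is bookkeeping of parameters and conventions rather than any genuine difficulty: the various sources normalize the matrix-variate beta, Wilks lambda, and the shift $(M-2N-1)/2$ in \eqref{eq:pdfBeta} differently (half-integer versus integer degrees of freedom, $\mathcal{B}_N(a,b)$ versus $\mathcal{B}_N(2a,2b)$), so one must carefully verify that $n_1=N+1$, $n_2=M-N$ from Proposition~\ref{prop:pdfBeta} map correctly onto the degrees-of-freedom parameters $m=M-N$ and $n=N+1$ of $\Lambda(N,M-N,N+1)$ and onto the shape parameters of each $u_n$. A secondary point worth checking is the regularity condition $M\ge 2N$ already assumed in Section~\ref{subsec:statistics}, which guarantees the Wisharts are nonsingular so that all the determinant manipulations and the beta-matrix realization are valid.
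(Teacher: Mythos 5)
Your proposal is correct and follows essentially the same route as the paper: Proposition~\ref{prop:pdfBeta} identifies the $r_n$ as eigenvalues of a $\mathcal{B}_N(\tfrac{N+1}{2},\tfrac{M-N}{2})$ matrix, the mirror symmetry gives $T=|{\bf I}_N-{\bf R}|$ as the determinant of a $\mathcal{B}_N(\tfrac{M-N}{2},\tfrac{N+1}{2})$ matrix, and the product-of-independent-betas decomposition follows; the paper obtains that last step via Muirhead's triangular decomposition of the matrix-variate beta (Theorem 3.3.3), which is the same fact as the Wilks lambda factorization you cite, just packaged at the matrix level rather than through the Wishart definition of $\Lambda(d,m,n)$.
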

\begin{proof}
According to Proposition \ref{prop:pdfBeta},  the $r_n$  in \eqref{eq:GLRT} are distributed as the eigenvalues of the matrix variate beta distribution $\mathcal{B}_N(\frac{1}{2} n_1,\frac{1}{2}  n_2)$ with parameters $n_1=N+1$ and $n_2= M-N$. Using the mirror symmetry property
of the beta distribution, the $(1 - r_n)$ are distributed as the eigenvalues of the random matrix ${\bf U} \sim \mathcal{B}_N(\frac{1}{2} n_2,\frac{1}{2}  n_1)$.
According now to \cite[Theorem 3.3.3, p. 110]{Muirhead:2005}, ${\bf U}$ can be decomposed as ${\bf U}={\bf \Theta}^T {\bf \Theta}$ where ${\bf \Theta}$ is upper triangular with diagonal entries $\theta_{nn}$ that are independent and where $u_n \equiv \theta_{nn}^2 \sim \mathcal{B}\left(\frac{n_2-n+1}{2},\frac{n_1}{2} \right)$ for $1\le n \le N$. This concludes the proof since $T=|{\bf U}|= \prod_{n=1}^N\theta_{nn}^2$.
\end{proof}

Equation \eqref{eq:Tbeta} gives also a more efficient way to sample from the null distribution of $T$ in $O(N)$ independent draws, as it is actually not required to generate the $2N \times 2N$ sample covariance matrix ${\bf S}$, nor to compute the eigenvalues of ${\bf \Gamma}({\bf S})$.


\subsubsection{High-dimensional asymptotic distribution under $H_0$}

The characterization  given in \eqref{eq:Tbeta} allows us to derive, under the null hypothesis $H_0$,
an asymptotic distribution for the GLRT statistic $T$ in the high dimensional ({\em i.e.} large $N$) case.
This yields a simple tractable closed form approximation of the considered Wilks lambda distribution when both the dimension $N$ and the sample size $M$ are large.
\begin{theorem}[Central limit theorem in high dimension]
Let  $T'= - \ln T$ where $T$ is the GLRT statistic given in \eqref{eq:GLRT}. Assume that $M,\,N\, \rightarrow  \, \infty$  so that the ratio $M/N \, \rightarrow \, \gamma \in (2,+\infty)$. Under $H_0$, the following asymptotic normal distribution is obtained for $T'$:
\begin{align}
 \frac {1}{s} \left( T' - m \right) \ {\xrightarrow {d}} \ \mathcal{N}(0,1)
\label{eq:normasympt}
 \end{align}
where
\begin{align*}
m &= M \left[ \ln \frac{\gamma}{\gamma-1} + \frac{\gamma-2}{\gamma}  \ln \frac{\gamma-2}{\gamma-1} \right]
+ \frac{1}{2} \ln \frac{\gamma}{\gamma-2} ,\\
s^2 & = 2 \left[ \ln \frac{(\gamma-1)^2}{\gamma(\gamma-2)} + \frac{1}{M} \frac{1}{\gamma-2} \right].
\end{align*}
\label{thm:asympt}
\end{theorem}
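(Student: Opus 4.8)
The plan is to exploit the product form $T=\prod_{n=1}^{N}u_{n}$ of Theorem~\ref{thm:GLRT}, so that under $H_{0}$
\[ T'=-\ln T=\sum_{n=1}^{N}Y_{n},\qquad Y_{n}:=-\ln u_{n}, \]
where the $Y_{n}$ are \emph{independent} and $u_{n}\sim\mathcal{B}(a_{n},b)$ with $a_{n}=\tfrac{M-N-n+1}{2}$, $b=\tfrac{N+1}{2}$. Thus $T'$ is a sum of independent, non-identically distributed summands (a triangular array, since $a_{n},b$ depend on $N$), so the natural tool is a Lyapunov central limit theorem. The statement then splits into two tasks: (i) verifying the Lyapunov condition for $\{Y_{n}\}$, and (ii) showing that the exact moments satisfy $\mathbb{E}[T']=m+o(1)$ and $\var(T')=s^{2}+o(1)$ with $s$ a fixed positive constant, so that centring and scaling by $m$ and $s$ instead of by the exact moments perturbs the standardized statistic only by $o(1)$.

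For (ii) I would use the beta identities $\mathbb{E}[-\ln u]=\psi(a+b)-\psi(a)$ and $\var(-\ln u)=\psi'(a)-\psi'(a+b)$ for $u\sim\mathcal{B}(a,b)$ (from differentiating $\log\mathbb{E}[u^{s}]=\log\Gamma(a+s)-\log\Gamma(a)-\log\Gamma(a+b+s)+\log\Gamma(a+b)$ at $s=0$), where $\psi,\psi'$ are the digamma and trigamma functions. Writing $n=tN$, $t\in(0,1]$, and $\gamma=M/N$, one has $a_{n}=\tfrac{N}{2}(\gamma-1-t)+O(1)$ and $a_{n}+b=\tfrac{N}{2}(\gamma-t)+O(1)$, both of order $N$; because $\gamma>2$ \emph{strictly}, $a_{N}$ stays bounded away from $0$, which keeps all trigamma values bounded and makes $\var(Y_{n})$ uniformly of size $O(1/N)$ — this is exactly where the hypothesis $\gamma>2$ (rather than $\gamma\ge 2$, as allowed in Theorem~\ref{thm:pdfev}) is used. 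Inserting $\psi(x)=\ln x-\tfrac{1}{2x}+O(x^{-2})$ and $\psi'(x)=\tfrac1x+\tfrac1{2x^{2}}+O(x^{-3})$ gives, uniformly in $n$, $\mathbb{E}[Y_{n}]=\ln\tfrac{\gamma-t}{\gamma-1-t}+O(1/N)$ and $\var(Y_{n})=\tfrac{2}{N(\gamma-1-t)(\gamma-t)}+O(1/N^{2})$. Summing over $n$ requires the Euler–Maclaurin formula: $\sum_{n}\ln\tfrac{\gamma-n/N}{\gamma-1-n/N}$ becomes $N\int_{0}^{1}\ln\tfrac{\gamma-t}{\gamma-1-t}\,\mathrm{d}t$ plus an $O(1)$ boundary correction, while the $O(1/N)$ per-term corrections of $\mathbb{E}[Y_{n}]$ add a further $O(1)$ integral; collecting these and evaluating the elementary integrals produces $M\big[\ln\tfrac{\gamma}{\gamma-1}+\tfrac{\gamma-2}{\gamma}\ln\tfrac{\gamma-2}{\gamma-1}\big]+\tfrac12\ln\tfrac{\gamma}{\gamma-2}$, which is $m$. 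Likewise $\sum_{n}\var(Y_{n})\to 2\int_{0}^{1}\tfrac{\mathrm{d}t}{(\gamma-1-t)(\gamma-t)}=2\ln\tfrac{(\gamma-1)^{2}}{\gamma(\gamma-2)}$, the leading part of $s^{2}$; the $\tfrac{1}{M(\gamma-2)}$ term arises from the next Euler–Maclaurin/expansion refinement and is retained only to sharpen the finite-sample approximation.

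For (i) take fourth moments: since each $u_{n}$ is beta with both parameters of order $N$, $\ln u_{n}$ concentrates, and its centred fourth moment is $3\,\var(Y_{n})^{2}+\big(\psi'''(a_{n})-\psi'''(a_{n}+b)\big)=O(1/N^{2})$ uniformly in $n$; hence $\sum_{n=1}^{N}\mathbb{E}\big[(Y_{n}-\mathbb{E}Y_{n})^{4}\big]=O(1/N)\to 0$, while $\var(T')^{2}\to s^{4}>0$ (again using $\gamma>2$, which makes $s^{2}=2\ln\tfrac{(\gamma-1)^{2}}{\gamma(\gamma-2)}$ finite and positive). Lyapunov's CLT then yields $\var(T')^{-1/2}\big(T'-\mathbb{E}[T']\big)\xrightarrow{d}\mathcal{N}(0,1)$, and combining with $\mathbb{E}[T']=m+o(1)$ and $\var(T')=s^{2}+o(1)$ from (ii) gives $\tfrac1s(T'-m)\xrightarrow{d}\mathcal{N}(0,1)$, i.e.\ \eqref{eq:normasympt}.

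The step I expect to be the main obstacle is the bookkeeping in (ii): the leading Riemann-sum limit is immediate, but extracting the exact additive constant $\tfrac12\ln\tfrac{\gamma}{\gamma-2}$ forces one to carry simultaneously the $-\tfrac{1}{2x}$ terms of the digamma expansion and the first Euler–Maclaurin boundary term, and to check that the resulting logarithmic contributions telescope correctly; moreover all the $O(1/N)$ remainders must be made uniform in $n$ right up to the endpoint $n=N$, which is possible only because $\gamma>2$ keeps $a_{N}$ uniformly bounded below. The other ingredients — the elementary integrals, the beta-moment identities, and the standard asymptotics of $\psi,\psi',\psi'''$ — are routine.
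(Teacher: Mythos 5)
Your proposal is correct and follows essentially the same route as the paper's proof: decompose $T'=\sum_n(-\ln u_n)$ via Theorem~\ref{thm:GLRT}, compute the exact mean and variance from the digamma/trigamma identities for log-beta variables, expand asymptotically to identify $m$ and $s^2$, and verify the Lyapunov condition through the $O(1/N^2)$ bound on the centred fourth moments before concluding with Slutsky. The only cosmetic difference is that you carry out the summation asymptotics by Euler--Maclaurin whereas the paper telescopes the digamma sums into log-gamma ratios and applies Stirling's formula; both handle the same bookkeeping that the paper omits ``for the sake of brevity.''
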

\begin{proof}
See Appendix \ref{app:asympt}.
\end{proof}
Bartlett derived a classical approximation for Wilks lambda distribution \cite[p. 94]{Mardia:1979} in a low-dimensional setting. This gives, when the dimension $N$ is fixed while $M$ goes to infinity, the same asymptotic distribution as obtained in \cite{Walden:2009}:
\vspace{-1mm}
\begin{align}
 -(M-N) \ln T   \ {\xrightarrow {d}} \ \chi^2_{N(N+1)},
 \label{eq:Bart}
\end{align}
where $ \chi^2_{N(N+1)}$ denotes the chi-squared distribution with $N(N+1)$ degrees of freedom. Note that this approximation was used recently in \cite{Hasija2017}, in a high-dimensional setting in the absence of better approximation.

Using Theorem \ref{thm:asympt}, the Bartlett  approximation can now be
adjusted
to cover both the low and high-dimensional cases.
Let $\mathcal{G} \left( q, p \right)$ denote the gamma distribution with pdf
$f(x) \propto x^{q-1}e^{-x/p}$, where $q$ and $p$ are the shape and scale
parameters respectively.
\begin{Corollary}[Adjusted Bartlett approximation]
Let $\gamma=M/N >2$,
then the log-GLRT statistics $T'=- \ln (T) $ can be approximated as a shifted gamma distribution:
\begin{align}
 \frac{1}{s} \left(T' - \alpha \right) \approx \mathcal{G} \left( q, p \right),
 \label{eq:shiftgam}
\end{align}
with
\begin{align*}
 q &= N(N+1)/2, \quad
 p = \sqrt{ 1 / q }, \quad
 \alpha = m - pq s,
\end{align*}
and $m$ and $s^2$ are defined in Theorem \ref{thm:asympt}, and where $\approx$
stands for pointwise equivalence of distribution functions for large $M$
under both the low dimensional, {\em i.e.} $N$ is fixed and small w.r.t. $M$,
or the high dimensional, {\em i.e.} $N$ has order of $M$, regime.
\label{cor:glrtgam}
\end{Corollary}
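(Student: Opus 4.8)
The plan is to show that the two-parameter shifted gamma law in \eqref{eq:shiftgam} is built precisely so as to (i) reproduce the first two moments of the asymptotic distribution of $T'$ and (ii) degenerate to the correct limiting law in each regime: the Gaussian law of Theorem~\ref{thm:asympt} when $N$ grows proportionally to $M$, and Bartlett's $\chi^2_{N(N+1)}$ law \eqref{eq:Bart} when $N$ is held fixed and $M\to\infty$. I would begin with the elementary moment matching: if $Y\sim\mathcal{G}(q,p)$ then $\mathbb{E}[Y]=pq$ and $\var(Y)=p^2q$, so with $q=N(N+1)/2$ and $p=1/\sqrt q$ one has $\mathbb{E}[Y]=\sqrt q$ and $\var(Y)=1$; hence $\alpha+sY$ has mean $\alpha+s\sqrt q$ and variance $s^2$, and the choice $\alpha=m-pqs=m-s\sqrt q$ makes these equal to the $m$ and $s^2$ of Theorem~\ref{thm:asympt}. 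Thus $\tfrac1s(T'-\alpha)$ and $Y$ share the mean $\sqrt q$ and variance $1$ by construction, before any limit is taken.

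High-dimensional direction: as $N\to\infty$ we have $q=N(N+1)/2\to\infty$, so the classical gamma-to-normal limit yields $\big(\mathcal{G}(q,p)-pq\big)\big/\sqrt{p^2q}=\mathcal{G}(q,p)-\sqrt q\xrightarrow{d}\mathcal{N}(0,1)$ (for instance, write $\mathcal{G}(q,p)$ as a sum of $\lfloor q\rfloor$ i.i.d.\ exponentials plus a bounded remainder, or use characteristic functions). On the statistic side, Theorem~\ref{thm:asympt} gives $\tfrac1s(T'-m)\xrightarrow{d}\mathcal{N}(0,1)$, and since $m-\alpha=pqs=s\sqrt q$ this is exactly $\tfrac1s(T'-\alpha)-\sqrt q\xrightarrow{d}\mathcal{N}(0,1)$. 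Hence $\tfrac1s(T'-\alpha)$ and $\mathcal{G}(q,p)$ have the same centred limiting law, so their distribution functions agree pointwise in the limit.

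Low-dimensional direction: fix $N$ and let $M\to\infty$, so $\gamma=M/N\to\infty$. A Taylor expansion of the closed forms in Theorem~\ref{thm:asympt} as $\gamma\to\infty$ gives $s^2=\tfrac{2(N+1)}{N\gamma^2}(1+o(1))$ and $m=\tfrac{N+1}{\gamma}(1+o(1))$, whence $s=\tfrac1\gamma\sqrt{2(N+1)/N}\,(1+o(1))$ and, using $pq=\sqrt q$, $\alpha=m-s\sqrt q=O(\gamma^{-2})$. Then the approximating variable for $-(M-N)\ln T=(M-N)T'$ under \eqref{eq:shiftgam} is $(M-N)\alpha+(M-N)sY$ with $Y\sim\mathcal{G}(q,p)$; here $(M-N)sY\sim\mathcal{G}\big(q,(M-N)sp\big)$ with $(M-N)sp=2(1-\gamma^{-1})(1+o(1))\to2$, while $(M-N)\alpha=O(N/\gamma)\to0$. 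Therefore $(M-N)\alpha+(M-N)sY\xrightarrow{d}\mathcal{G}(q,2)=\chi^2_{2q}=\chi^2_{N(N+1)}$, i.e.\ the shifted gamma approximation collapses onto Bartlett's limit \eqref{eq:Bart} (equivalently the limit of \cite{Walden:2009}) for $-(M-N)\ln T$.

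Combining the two directions gives the asserted pointwise equivalence of the distribution function of $\tfrac1s(T'-\alpha)$ with that of $\mathcal{G}(q,p)$ for large $M$ in both the low- and high-dimensional regimes. The only genuine work is the $\gamma\to\infty$ bookkeeping — in particular verifying $(M-N)sp\to2$ and $(M-N)\alpha\to0$ from the closed forms of $m$ and $s^2$ — together with the invocation of the gamma-to-normal central limit theorem; I expect this routine asymptotic accounting, rather than any delicate estimate, to be the main (and rather mild) obstacle.
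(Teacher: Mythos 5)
Your proposal is correct and follows essentially the same route as the paper's proof: match the first two moments by construction, invoke the gamma-to-normal limit to recover Theorem~\ref{thm:asympt} in the high-dimensional regime, and show that in the fixed-$N$ regime the shifted gamma collapses onto Bartlett's $\chi^2_{N(N+1)}$ limit for $(M-N)T'$ via the expansions of $m$ and $s^2$. Your bookkeeping (e.g.\ $(M-N)sp\to 2$, $(M-N)\alpha\to 0$) is the explicit version of the paper's statement that $ps=\tfrac{2}{M-N}+O(1/M^2)$ and $\alpha/s\to 0$, and you even correct a small typo in the paper, which writes $\chi^2_q$ where $\chi^2_{2q}$ is meant.
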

\begin{proof} In the high dimensional setting, under the assumptions of Theorem
\ref{thm:asympt},
the gamma distribution $\mathcal{G} \left( q, p \right)$ converges towards the normal one as the shape parameter $q$
goes to infinity. Since the mean and variance of $T'$ are the same for the
shifted gamma approximation \eqref{eq:shiftgam}  and the normal one
\eqref{eq:normasympt}, they are asymptotically equivalent.

In the low-dimensional setting where $N$ is fixed while $M$ goes to infinity,
one gets that $p$ and $q$ are fixed, $\alpha/s \rightarrow 0$,  and
$p s =
\frac{2}{N(\gamma-1)} +
O\left( 1/\gamma^2 \right) = \frac{2}{M-N} + O(1/M^2)$. Moreover, in this limiting case, $T' {\xrightarrow {d}} 0$
according to the decomposition given in \eqref{eq:Tbeta}.
Thus $\frac{1}{s} \left(T' - \alpha \right) = \frac{p}{2} (M-N)T' + o(1)$.
 Because $\chi^2_{q} = \frac{2}{p}  \mathcal{G} \left( q, p \right)$, \eqref{eq:shiftgam} means that
$(M-N) T'$ is asymptotically $\chi^2_{q}$ distributed. This is the Bartlett
limiting distribution \eqref{eq:Bart}, which is known to be valid in this
low dimensional asymptotic regime
\end{proof}

\subsection{Roy's test}
\label{subsec:roy}
In multivariate statistics, Roy's test is a well known procedure to detect the alternate hypothesis $H_1$ for which at least one eigenvalue is non-zero. This test relies on the statistics of the largest eigenvalue \cite[p. 84]{Mardia:1979} or, equivalently in our case, the statistics of the largest squared impropriety coefficients $r_1= l_1^2$. The principle is to reject the $H_0$ hypothesis as soon as $r_1 > \eta_{\alpha}$,  where the threshold $\eta_{\alpha}$ is tuned according to the law of $r_1$ under the $H_0$ hypothesis together with the nominal control level $\alpha$ (probability of false alarm).

\begin{theorem}[Limiting null distribution for Roy's test]
As $M,\,N\, \rightarrow \, \infty$ such that the ratio $M/N \, \rightarrow \, \gamma \in[2,+\infty)$ is finite, let $W= \log{( r_1 / (1-r_1) )}$ be the {\em logit} transform of the largest impropriety coefficient $r_1$. Under $H_0$, the asymptotic law of $W$
converges towards a first order Tracy-Widom law denoted as $\mathcal{TW}_1$:
\begin{align}
 \frac{W - \mu}{\sigma} \, \rightarrow \mathcal{TW}_1,
 \label{eq:W}
\end{align}
with
\begin{align*}
    \mu &= 2\log \tan \left( \tfrac{\phi +\psi}{ 2} \right),\\
    \sigma^3 &= \frac{16}{ M^2 }  \frac{1}{ \sin^2 (\phi+\psi)  \sin \psi \sin \phi },\\
    \psi &= \arccos{ \left( \tfrac{M-2N+1}{M} \right) },\\
    \phi &= \arccos{ \left( \tfrac{M-2N-1}{M} \right) }.
\end{align*}
\label{thm:Roy}
\end{theorem}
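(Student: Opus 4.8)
The plan is to recognize the joint null law in Proposition~\ref{prop:pdfBeta} as that of the eigenvalues of a real Jacobi (double-Wishart / MANOVA) ensemble, for which a Tracy--Widom limit theorem for the largest root is already available, and then to reduce the claim to a matching of centering and scaling constants. Indeed, by Proposition~\ref{prop:pdfBeta} the vector $(r_1,\dots,r_N)$ is, under $H_0$, distributed as the eigenvalues of the matrix-variate beta law $\mathcal{B}_N\!\left(\tfrac{n_1}{2},\tfrac{n_2}{2}\right)$ with $n_1=N+1$ and $n_2=M-N$; equivalently, $r_1,\dots,r_N$ are the roots of $\det\!\left(\mathbf{A}-r(\mathbf{A}+\mathbf{B})\right)=0$ for independent $\mathbf{A}\sim W_N(\mathbf{I},n_1)$, $\mathbf{B}\sim W_N(\mathbf{I},n_2)$. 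This is exactly the real ($\beta=1$) Jacobi ensemble treated by Johnstone (2008), whose theorem states that the \emph{logit} $W=\log\!\left(r_1/(1-r_1)\right)$, centered and scaled by deterministic quantities built from two angular parameters, converges in law to the GOE Tracy--Widom distribution $\mathcal{TW}_1$. The use of the logit here is not cosmetic: it is the transform that linearizes the edge map and yields a statement valid across the whole parameter range, which is why $W$ rather than $r_1$ itself appears in the theorem.

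First I would instantiate the cited constants with $(p,n_1,n_2)=(N,\,N+1,\,M-N)$. The combined degrees-of-freedom parameter is $n_1+n_2-1=M$, which produces the prefactor $16/M^2$. The two angles are governed by the pair $\{N,N+1\}=\{p,n_1\}$ through $\sin^2(\psi/2)=\tfrac{N-1/2}{M}$ and $\sin^2(\phi/2)=\tfrac{N+1/2}{M}$; since $\cos x=1-2\sin^2(x/2)$, these are precisely $\cos\psi=\tfrac{M-2N+1}{M}$ and $\cos\phi=\tfrac{M-2N-1}{M}$, i.e. the $\psi,\phi$ of the statement. Substituting into the center $2\log\tan\!\left(\tfrac{\phi+\psi}{2}\right)$ and the cube-root scale $\sigma^3=\tfrac{16}{M^2}\,\tfrac{1}{\sin^2(\phi+\psi)\,\sin\phi\,\sin\psi}$ then reproduces the stated $\mu$ and $\sigma$, so no new asymptotic computation is needed beyond this bookkeeping.

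Second I would check that the hypotheses of the Tracy--Widom theorem hold in our regime. With $M/N\to\gamma\in(2,+\infty)$ all of $p,n_1,n_2\to\infty$, the ratios $p/M\to1/\gamma$ and $n_2/M\to(\gamma-1)/\gamma$ stay in the open interval $(0,1)$, and the upper edge of the limiting empirical spectrum is $c=4(\gamma-1)/\gamma^2<1$ by Theorem~\ref{thm:pdfev}; thus $r_1$ accumulates at a genuine \emph{soft} edge strictly inside $[0,1]$, which is exactly the situation producing the $\mathcal{TW}_1$ fluctuation. Note also that $n_1=N+1$ and $n_2=M-N$ are positive integers with $n_1,n_2\ge N$, so we are squarely within the scope of the double-Wishart formulation.

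The delicate point is the endpoint $\gamma=2$. There $c=1$, so the soft right edge collides with the hard edge $r=1$; the two angles both tend to $\pi/2$ with $\phi+\psi\to\pi$, whence $\sin(\phi+\psi)\to0$ and the scale $\sigma$ degenerates (for $M=2N$ it is literally infinite, since $\phi+\psi=\pi$ exactly). This is consistent with the arcsine-law behaviour noted after the Corollary: $r_1\to1$ and a different, hard-edge limit then governs $W$. Hence the Tracy--Widom statement is clean for $\gamma>2$, and the endpoint $\gamma=2$ should be read as a degenerate limiting case requiring a separate argument (or excluded). Apart from this boundary issue, I expect the remaining work to be purely the trigonometric bookkeeping above together with a careful citation of the convergence (and, if desired, rate) result for the real Jacobi ensemble.
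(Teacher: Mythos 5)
Your proposal is correct and follows essentially the same route as the paper, whose proof is simply the combination of Proposition~\ref{prop:pdfBeta} with Johnstone's Tracy--Widom theorem for the largest root of the real Jacobi (double-Wishart) ensemble. Your explicit identification of the parameters $(p,n_1,n_2)=(N,N+1,M-N)$ yielding $n_1+n_2-1=M$ and the stated angles, as well as your caveat about the degenerate edge case $\gamma=2$, merely flesh out details the paper leaves to the citation.
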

\begin{proof}
This is a direct result of proposition \ref{prop:pdfBeta} and the asymptotic law of the largest eigenvalue of a  $\mathcal{B}_N(\frac{1}{2} n_1,\frac{1}{2} n_2)$ matrix-variate distribution
given in \cite{johnstone2009}.
\end{proof}
The variable $W$ being expressed as an increasing function of  $r_1$, Roy's test is equivalent to $W$ and the Theorem \ref{thm:Roy} allows for the {\color{blue} calibration} of the test.
It should be noted that \cite{chiani2016} proposes a procedure to evaluate the exact (not asymptotic) law of $W$.
Nevertheless, the simple approximation by the $\mathcal{TW}_1$ law is in practice sufficiently precise for most cases as long as the dimension is large enough (e.g. typically for $N \ge 10$).

\subsection{Spiked impropriety model}

{\color{blue} Recently, spiked models involving complex valued vectors have been found to nicely describe multiplexed phase retrieval problems in complex media imaging \cite{dong:2019}. Here, we provide theoretical results related to the phase transition behavior occurring in such models.}
Spiked models are special sparse cases for the alternative hypothesis $H_1$. They assume that the rank of the
population matrix is low and remains fixed in the high-dimensional asymptotic regime.
For the {\color{blue} impropriety} test setting, this means that the number $k$ of non-zero
eigenvalues of $\bf \Gamma(\bf S)$, or equivalently $\ddot{\bf C}$, is fixed. An
example, which corresponds to a low-rank improper signal corrupted by proper
noise, is given in Section \ref{sec:simuspike}.

\begin{theorem}[Phase transition threshold]
Assume that there exist $k$ non-zero population impropriety coefficients $\lambda_1 \ge  \cdots \ge \lambda_k >0$,
and $\lambda_{k+1} = \cdots = \lambda_N= 0$
 where $k$
is fixed.
Under the assumptions of Theorem \ref{thm:pdfev}, we have the following convergence for
the square of the largest impropriety coefficient $r_n$ with $1 \le n \le k$,
\begin{align*}
  \textrm{ if } \lambda_n^2 \le \rho_c, \qquad r_n & \stackrel{a.s.}{\longrightarrow} c,\\
  \textrm{If } \lambda_n^2 > \rho_c, \qquad r_n & \stackrel{a.s.}{\longrightarrow} \overline{\rho}_n,
\end{align*}
where
\begin{align}
\rho_c & = \frac{1}{\gamma-1}, \quad \overline{\rho}_n =\lambda_n^2 \left( \frac{\gamma-1}{\gamma} + \frac{1}{\gamma \lambda_n^2}  \right)^2,
\label{eq:rhoc}
\end{align}
are respectively the phase transition threshold and the limiting values,
and $c$ is the edge of the limiting distribution of the bulk defined in Theorem
\ref{thm:pdfev}.
\label{thm:spike}
\end{theorem}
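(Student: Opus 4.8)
The plan is to exploit the built-in group invariance to normalise the population covariance, re-express the sample squared impropriety coefficients as squared cosines of principal angles between a random subspace of $\mathbb C^M$ and its complex conjugate, and then to run a Baik--Ben Arous--P\'ech\'e (BBP) type analysis for the angles that separate from the bulk of Theorem~\ref{thm:pdfev}.

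First, since the eigenvalues $1\ge l_1\ge\dots\ge l_N\ge 0$ of $\Gamma(\mathbf{S})$ are invariant under $\mathbf{S}\mapsto\mathbf{G}\mathbf{S}\mathbf{G}^T$ for every $\mathbf{G}\in\mathcal G$, and $\mathcal G$ acts transitively on $\mathcal T$, the joint law of $(l_1,\dots,l_N)$ depends on $\mathbf{C}$ only through $\lambda_1,\dots,\lambda_N$; I may therefore take $\mathbf{C}$ in the form of Lemma~\ref{lem:rep} with $\mathbf{G}=\mathbf{I}_{2N}$, i.e. $\dot{\mathbf{C}}=\mathbf{I}_{2N}$ and $\ddot{\mathbf{C}}=\mathrm{diag}(\mathbf{D}_\lambda,-\mathbf{D}_\lambda)$, only the first $k$ entries of $\mathbf{D}_\lambda$ being nonzero. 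A short singular value computation on the complex data matrix $\mathbf{Z}=[\mathbf{z}_1,\dots,\mathbf{z}_M]$ then shows that $r_n=\cos^2\theta_n$, where $\theta_1\le\dots\le\theta_N$ are the principal angles between the row space $\mathcal V\subset\mathbb C^M$ of $\mathbf{Z}$ and its conjugate $\overline{\mathcal V}$ (this is the geometry underlying the matrix-beta description of Proposition~\ref{prop:pdfBeta}, and it explains why $r_n\in[0,1]$). In the canonical form, $\mathcal V$ is spanned by $N$ independent complex Gaussian vectors, the first $k$ of which are tilted towards their conjugates by an amount $\lambda_n$ (the $n$-th generator has normalised overlap $\approx\lambda_n$ with the $n$-th generator of $\overline{\mathcal V}$), whereas under $H_0$ all generators are isotropic and the empirical distribution of the $\cos^2\theta_n$ converges to the law $f$ of Theorem~\ref{thm:pdfev}, with hard left edge $0$ and soft right edge $c=4(\gamma-1)/\gamma^2$.

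Second, for a supercritical spike I would apply the resolvent (Benaych-Georges--Nadakuditi) method to this ``spiked subspace versus its conjugate'' model: for $\rho>c$ the relevant resolvent of the $H_0$ object admits a deterministic equivalent expressed through the limiting Stieltjes transform $g(\rho)=\int f(r)\,(r-\rho)^{-1}\,\mathrm{d}r$ and its companion, and a principal angle produces an almost sure limit $r_n\to\rho>c$ exactly when $\rho$ solves a master equation relating $\rho$ to $\lambda_n$ and the eigenvector-overlap (validity) condition holds. Carrying the computation through, the master equation simplifies to $\gamma\sqrt{\rho}=(\gamma-1)\lambda_n+\lambda_n^{-1}$, i.e. $\rho=\overline\rho_n=\lambda_n^2\big(\tfrac{\gamma-1}{\gamma}+\tfrac{1}{\gamma\lambda_n^2}\big)^2$. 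Since $\ell\mapsto(\gamma-1)\ell+\ell^{-1}$ attains its minimum $2\sqrt{\gamma-1}=\gamma\sqrt{c}$ at $\ell^2=1/(\gamma-1)$, this solution lies strictly above the edge --- equivalently the validity condition holds --- precisely when $\lambda_n^2>\rho_c:=1/(\gamma-1)$, and $\overline\rho_n\downarrow c$ as $\lambda_n^2\downarrow\rho_c$; strict monotonicity of $\lambda_n^2\mapsto\overline\rho_n$ on that branch keeps the labelling consistent. This identifies $\rho_c$ and $\overline\rho_n$, and the identity $\overline\rho_n=c$ at $\lambda_n^2=\rho_c$ makes the transition continuous.

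Finally, for $\lambda_n^2\le\rho_c$ I would show that no angle escapes the bulk: the spiked data matrix differs from its $H_0$ version by a matrix of rank at most $k$ (only $k$ rows change), so $\mathcal V$ and its $H_0$ counterpart $\mathcal V^{(0)}$ differ by at most $k$ dimensions and the eigenvalues $r_n$ and $r_n^{(0)}$ of the associated projector products interlace up to an $O(k)$ shift; since the top $O(1)$ of the $r_n^{(0)}$ all converge to the soft edge $c$ (Theorem~\ref{thm:Roy} together with standard edge rigidity) while the bulk fills $[0,c]$, this forces $r_n\to c$ almost surely, which with the supercritical case gives the claimed dichotomy. The main obstacle I anticipate is the master-equation step: because $\mathbf{Z}$ and $\overline{\mathbf{Z}}$ are deterministically linked (equivalently $\dot{\mathbf{S}}$ and $\ddot{\mathbf{S}}$, both linear images of the single sample covariance $\mathbf{S}$, are strongly dependent), the $r_n$ are not the sample canonical correlations of two independent Gaussian samples, so off-the-shelf spiked canonical-correlation phase-transition results cannot be invoked directly; one has to establish the correct deterministic equivalent for this specific ensemble and then check that the outlier equation still reduces to the clean BBP form above, while controlling the alignment between the rank-$O(k)$ perturbation and the relevant singular vectors of the $H_0$ object so that the convergence is almost sure rather than merely in expectation.
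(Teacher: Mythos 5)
Your route is genuinely different from the paper's. The paper's proof of Theorem~\ref{thm:spike} is a two-line reduction: Proposition~\ref{prop:pdfBeta} identifies the sample coefficients with eigenvalues of a matrix-variate beta (Jacobi/MANOVA) ensemble with parameters $n_1=N+1$, $n_2=M-N$, and the phase transition is then read off from existing results on spiked beta-ensembles (Bao et al., Thm.~1.8; Johnstone--Onatski). You instead propose a from-scratch Benaych-Georges--Nadakuditi analysis of the principal angles between the data subspace $\mathcal V\subset\mathbb C^M$ and its conjugate $\overline{\mathcal V}$. Your algebra is consistent with the statement: $\gamma\sqrt{\overline\rho_n}=(\gamma-1)\lambda_n+\lambda_n^{-1}$, whose minimum $2\sqrt{\gamma-1}=\gamma\sqrt{c}$ is attained at $\lambda_n^2=1/(\gamma-1)=\rho_c$, does reproduce $\overline\rho_n$, the threshold, and continuity at the edge; and the subcritical argument via a rank-$O(k)$ perturbation, interlacing, and edge rigidity is the standard mechanism. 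What your approach would buy, if completed, is a self-contained proof not resting on the spiked-Jacobi literature, together with (in principle) eigenvector-overlap information that the paper does not state.

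The genuine gap is the one you flag yourself: the master equation is asserted (``carrying the computation through\dots'') rather than derived, and establishing the deterministic equivalent for this specific ensemble --- where $\overline{\mathcal V}$ is the \emph{conjugate} of $\mathcal V$ rather than an independent subspace --- is precisely the hard content that the paper outsources to the cited theorems. Note also that your stated worry (that conjugate-dependence prevents invoking off-the-shelf spiked-CCA results) is answered by the paper in a way your sketch misses: Proposition~\ref{prop:pdfBeta} is an \emph{exact finite-$N$ distributional identity} showing that the dependence manifests only as the $+1$ offset in $n_1$, so the off-the-shelf spiked-beta results do apply after this parameter shift; and since $\lim N/n_1=1$ with or without the offset, the shift is subleading for the first-order almost-sure limits claimed here (it matters only for fluctuation-level statements such as Theorems~\ref{thm:asympt} and \ref{thm:Roy}). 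Without either importing that reduction or actually establishing your deterministic equivalent, the supercritical half of your argument does not yet constitute a proof. A secondary, smaller point: the edge-rigidity input you invoke in the subcritical case (that the top $O(1)$ null eigenvalues converge a.s.\ to $c$) also needs justification beyond Theorem~\ref{thm:Roy}, which only gives convergence in distribution of the rescaled largest eigenvalue.
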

\begin{proof}
This follows from Proposition \ref{prop:pdfBeta} and from results for high dimensional limiting distribution of spiked models with
Beta distributed matrices given in \cite[see Theorem 1.8]{bao2019}, or   \cite{Johnstone2018}.
\end{proof}

Theorem \ref{thm:spike} shows that when the spikes are weaker than a given phase transition threshold $\rho_c$,
none of the sample impropriety coefficients separate from the bulk.
This makes the testing problem challenging, and Roy's test would be  powerless in this case. Conversely, for spikes $\lambda_n^2$ larger than $\rho_c$, is is easy to check that
$\overline{\rho}_n > c$. These sample impropriety coefficients separate now from the bulk, and Roy's test is expected to be very powerful.

\section{Simulations}
\label{sec:simu}
This section starts with simulations validating the accuracy of the asymptotic distributions derived under the properness hypothesis $H_0$.
Then, {\color{blue} impropriety} testing is illustrated under different alternative $H_1$ hypotheses: {\em i)} {\em equi-correlated} model
($N$ non-zero identical impropriety coefficients $\lambda_1=\cdots=\lambda_N = \rho> 0$), {\em ii)} spiked model
($\lambda_1 > 0$ while $\lambda_2=\cdots=\lambda_N = 0$) {\color{blue}and {\em iii)} a mixed model (first half of impropriety coefficients $\lambda_1,\ldots,\lambda_{[N/2]}>0$ gradually decreases,  and the other half is zero).} {\color{blue} Note that, in the following simulations, we make use, for the sake of readability, of the following abuse of notation: $\gamma$ will now denote the ratio $M/N$ in the considered approximations rather that its limit.}

\subsection{Empirical distribution of impropriety coefficients}
\label{subsec:adequation}

\subsubsection{Empirical vs limiting distributions}
\label{subsubsec:Eigen_EmpLaw}
{\color{blue} In order to illustrate the accuracy of Theorem \ref{thm:pdfev} in various settings (including small vector dimensions),} Fig. \ref{fig:hist} displays, for different values of $N$ and  $\gamma$, the empirical distribution of the squares of the sample impropriety coefficients
under the properness assumption $H_0$.
This shows the very good agreement with the limiting empirical distribution derived in Theorem \ref{thm:pdfev}.
Note that, when $N=10$, small fluctuations can be observed ({\color{blue}gray} bars) around the right edge $c$ of the limiting empirical distribution.
But in a larger dimension ($N=100$), the greatest coefficients converged well towards this edge.

\begin{figure}[htbp!]
    \begin{center}
    \includegraphics[width=0.99\linewidth]{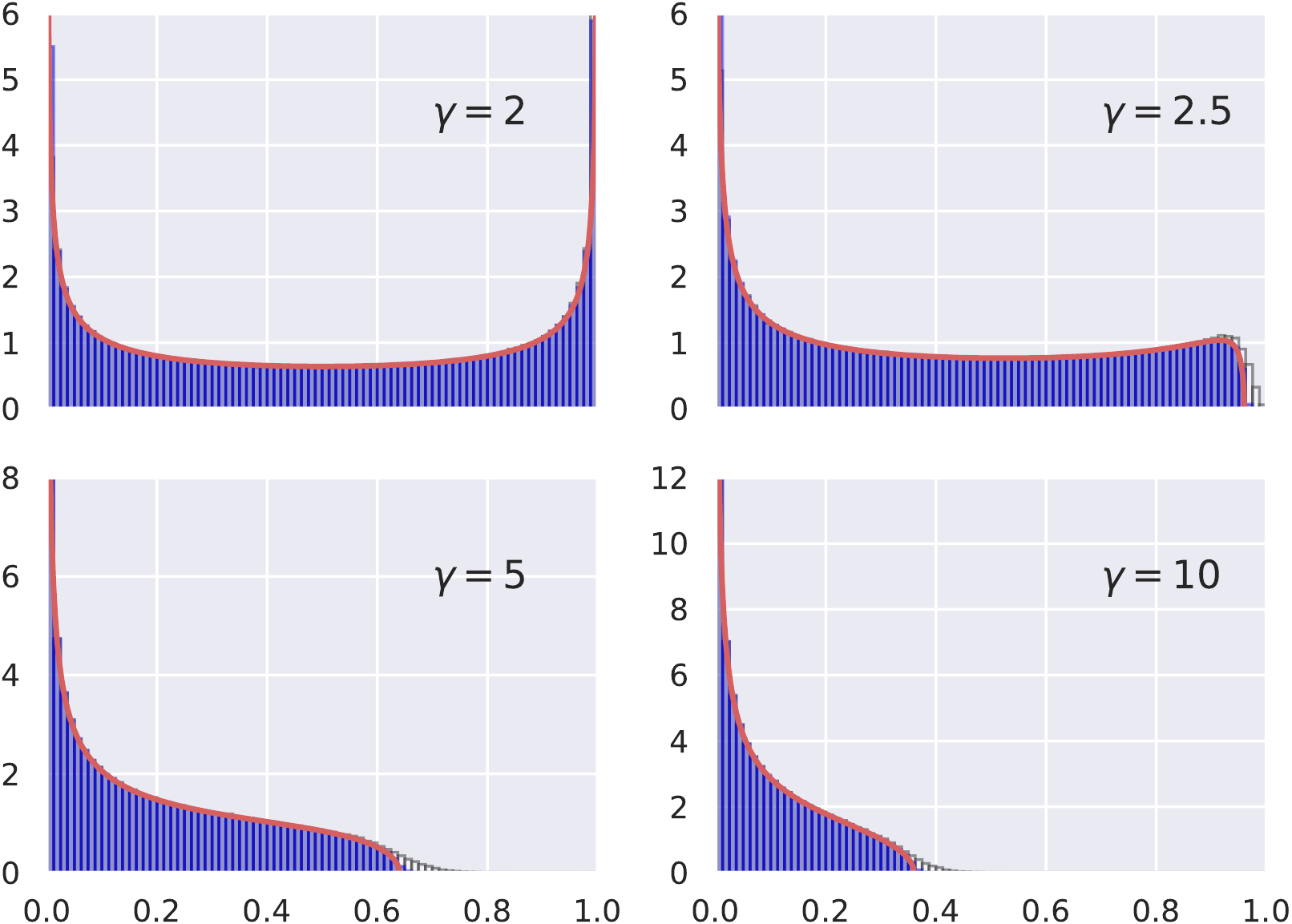}
    \end{center}
    \caption{Histograms of the squared sample impropriety coefficients under
    $H_0$ for different values of $\gamma$: blue bars are for $N=100$, {\color{blue}gray}
bars are for $N=10$. The pdf of the limiting distribution given in theorem
 \ref{thm:pdfev} is shown in  solid red line. Histograms obtained with $1000$
Monte-Carlo runs.
    \label{fig:hist}
    }
\end{figure}

\subsubsection{Distribution of the GLRT statistic}
\label{subsubsec:lawGLRT}

Fig. \ref{fig:pp} depicts, for different values of $M$ and  $\gamma$, a probability-probability plot of the theoretical null distribution of $T$ against each one of these asymptotic approximations.
A deviation from the $y=x$ line indicates a difference between the theoretical and the asymptotic distributions.
This shows that, as expected for high-dimensional setting (e.g.,  $\gamma \le 5$) and/or large sample sizes (e.g., $M \ge 1000$),
the asymptotic log-normal  distribution derived in {\color{blue} Theorem \ref{thm:asympt}} becomes very accurate and much better than the Bartlett approximation.
In addition, the adjusted Bartlett approximation obtained in Corollary \ref{cor:glrtgam} is very accurate in all cases
(low/high-dimension or small/large sample size).
{\color{blue}
This latter corrects and generalizes the classical Bartlett approximation that may behave poorly even for small (e.g. $N=4$) dimensional vectors, as we can see in the top-left subplot ($M=10$, $\gamma=2.5$) of Fig. \ref{fig:pp}.}
{The adjusted approximation is therefore of practical interest to calibrate the GLRT procedure according to a nominal significance level. 
\begin{figure}[ht!]
\centering
    \includegraphics[width=\columnwidth]{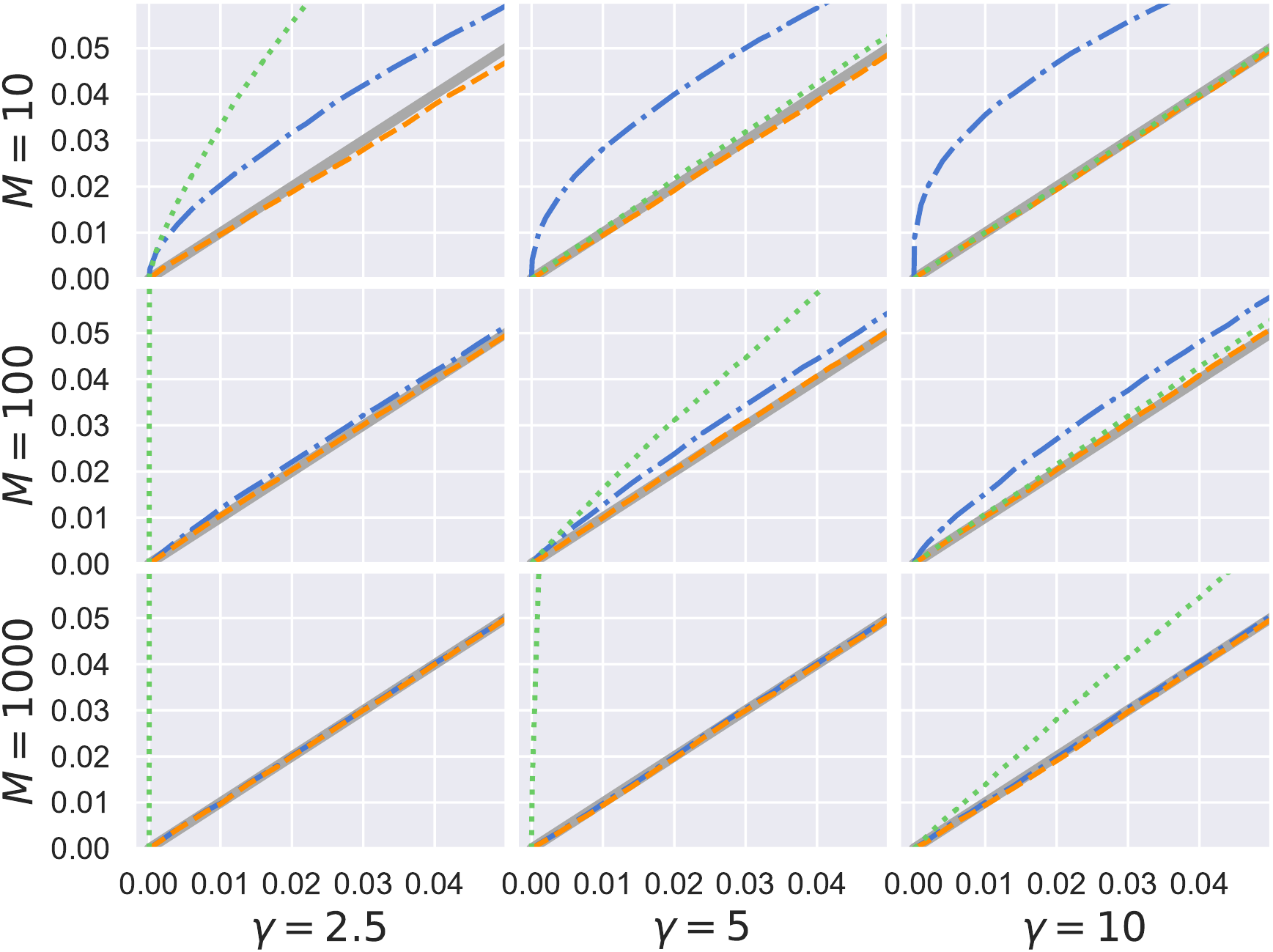}
    \caption{Comparison, for different values of $\gamma$ and $M$, of
asymptotic approximations for the GLRT statistics $T$:
    $\Pr(T> q_\alpha)$ under $H_0$ vs the nominal control level $\alpha$ in $[0,0.05]$ where
    $q_\alpha$ is the $1-\alpha$
    quantile either for the log-normal approximation given in Theorem \ref{thm:asympt},
    shown in dashdotted blue line, or the adjusted Bartlett approximation given in Corollary \ref{cor:glrtgam},
    shown in dashed orange line,
    or the standard Bartlett approximation \eqref{eq:Bart}, shown in dotted green line.
The solid gray line represents the $y=x$ values. Probabilities estimated with
$10^6$ Monte-Carlo runs.}
\label{fig:pp}
\end{figure}

\subsubsection{Distribution of the Roy's statistic}
\label{subsubsec:lawmaxeig}

Fig. \ref{fig:pp_roy} depicts, for different values of $N$ and  $\gamma$, a probability-probability plot of the
theoretical null distribution for the largest impropriety coefficients statistics $r_1$ , or equivalently its logit-transform  $W$,
against the asymptotic approximation given in Theorem \ref{thm:Roy} .
%
This shows that even for a moderate dimension ($N=10$), the Tracy-Widom approximation is quite accurate, and becomes
very accurate for a larger dimension ($N=100$).


\begin{figure}[htbp!]
\centering
\includegraphics[width=\columnwidth]{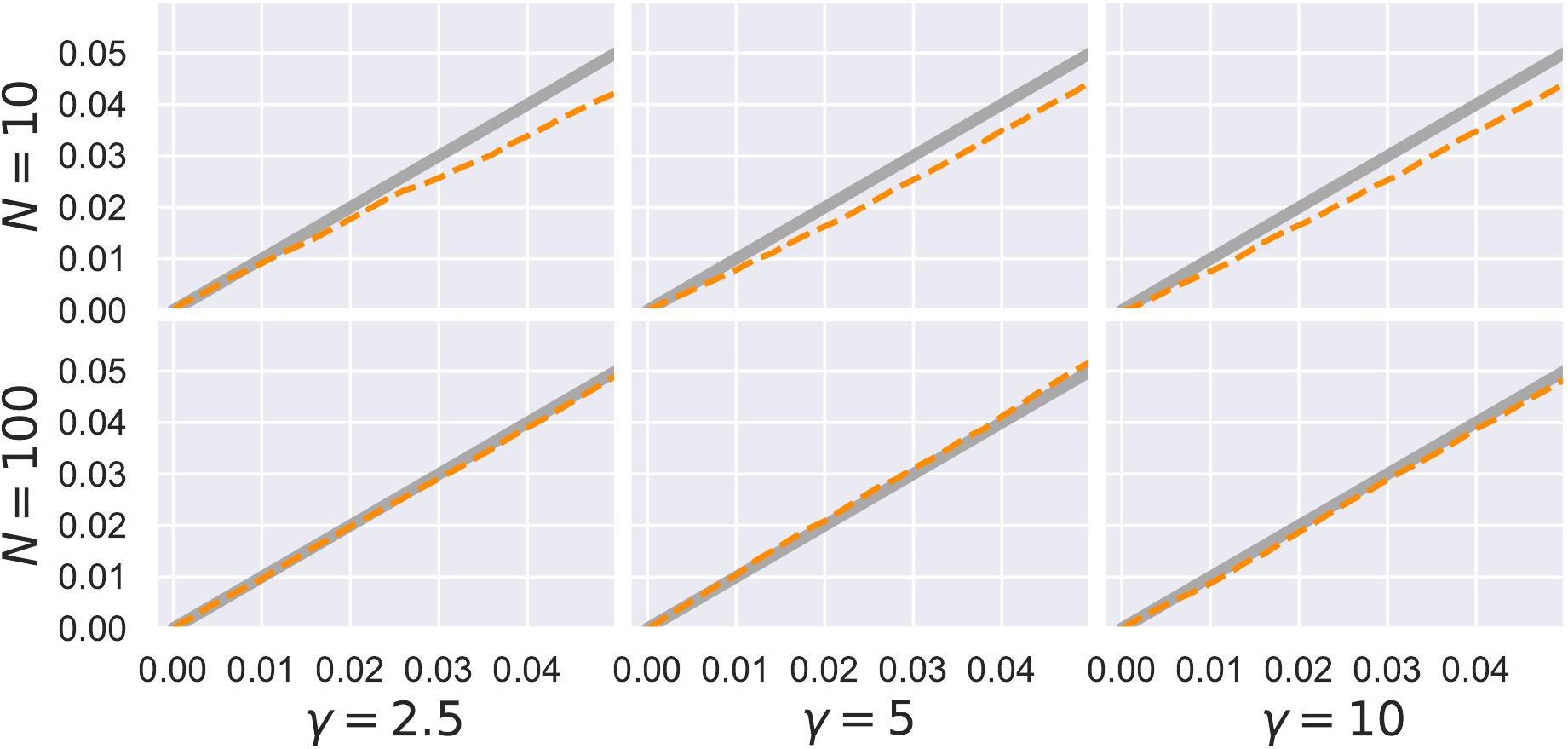}
    \caption{Comparison, for different values of $\gamma$,  of asymptotic
approximations for the Roy's statistics $W$:
    $\Pr(W> q_\alpha)$ under $H_0$ vs the nominal control level $\alpha$ in $[0,0.05]$ where
    $q_\alpha$ is the $1-\alpha$
    quantile for the Tracy-Widom approximation given in Theorem \ref{thm:Roy},
    shown in dashed orange line.
    The solid gray line represents the $y=x$ values.  Probabilities estimated with $10^6$ Monte-Carlo runs.}
\label{fig:pp_roy}
\end{figure}

\subsection{Some {\color{blue} impropriety} tests scenarios}
\label{subsec:2tests}

\subsubsection{Equal impropriety  coefficients}
\label{subsec:equicorr}
The case where the population impropriety coefficients are all non-zero and equal, hereinafter referred to as {\em equi-correlated} model, can be obtained
when the real and imaginary parts have a common contribution:
\begin{align}
\begin{split}
 {\bf u}_{m} &= {\bf s }_{m} + \sqrt{\theta} {\bf q}_m,\\
 {\bf v}_{m} &= {\bf t }_{m} + \sqrt{\theta} {\bf q}_m,
\end{split}
\label{eq:modequi}
\end{align}
where $\theta>0$, ${\bf s }_{m}$, ${\bf t }_{m}$  and ${\bf q}_{m}$ are i.i.d. Gaussian vectors in $\mathbb{R}^N$, for $1 \le m \le M$.
Straightforward computations show that the non-negative  roots of ${\bf \Gamma( {\bf C} )}$, i.e the
 population impropriety coefficients $\lambda_1, \ldots, \lambda_N$, are all equal to $\lambda \equiv \frac{\theta} {1 + \theta}$.

\begin{figure}[htbp!]
    \begin{center}
    \includegraphics[width=0.8\columnwidth]{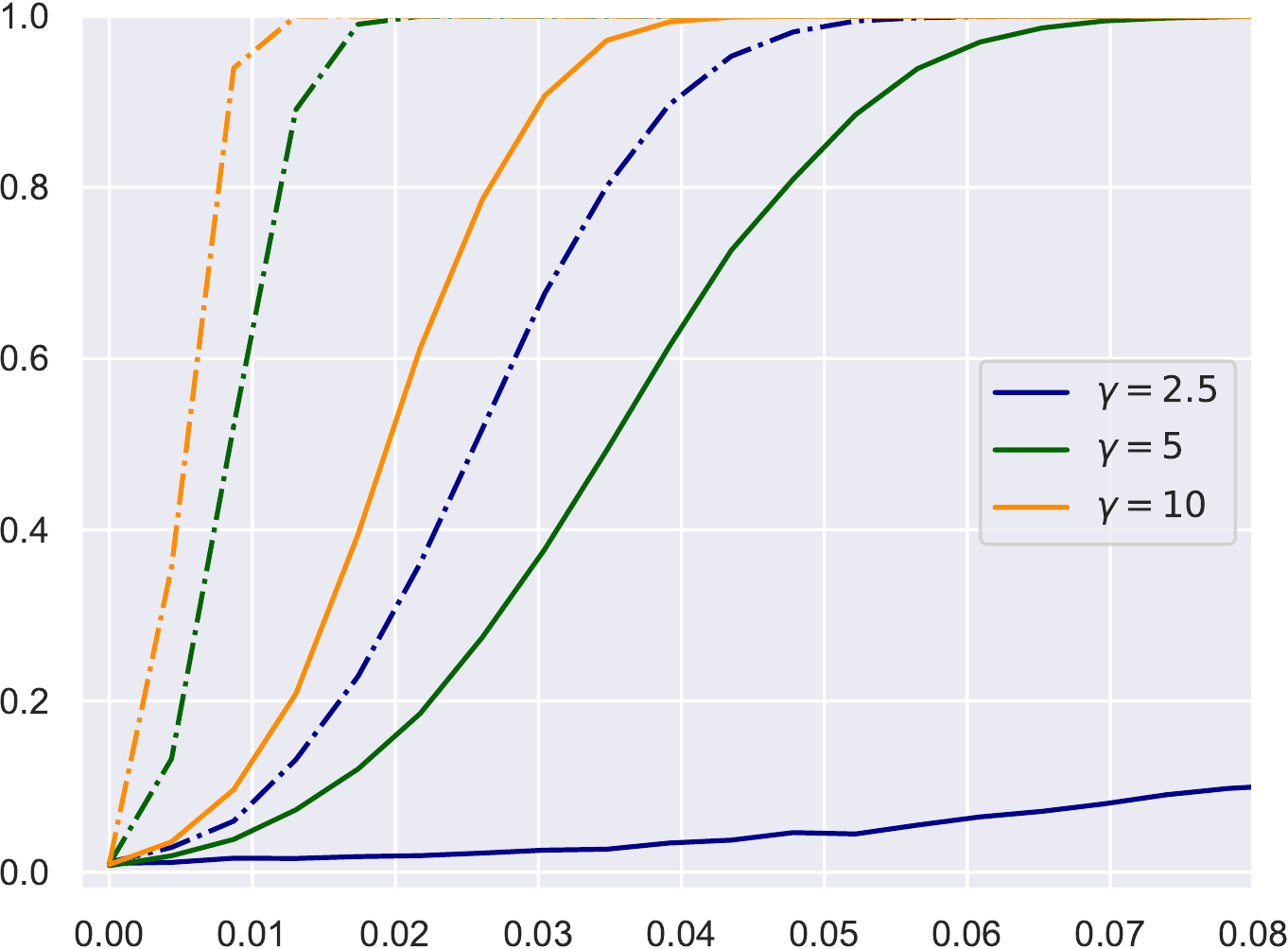}
    \end{center}
    \caption{Power of Roy's test, solid line, and GLRT, dashdotted line, vs
    the squared population impropriety coefficients $\lambda^2$  under the
    equi-correlated model described in \eqref{eq:modequi}, for different values of $\gamma$:
    blue lines for $\gamma=2.5$, green for $\gamma=5$, orange for $\gamma=10$
    (dimension $N=100$, nominal value of false alarm probability $\alpha=0.01$).
    Powers estimated with $1000$ Monte-Carlo runs.
    }
    \label{fig:power}
\end{figure}

Fig. \ref{fig:power} displays the power of both GLRT and Roy's test, under the alternative $H_1$ obtained for this equi-correlated model,
as a function of the impropriety level $\lambda^2$.
As expected, the GLRT, which uses information from all sample impropriety coefficients, is here much more powerful than Roy's test,
especially in the high dimension case ($\gamma=2.5$) where $M$ and $N$ are close.

\subsubsection{Spiked model}
\label{sec:simuspike}
A Gaussian spike model with a single non-zero impropriety coefficient can be obtained
when the real and imaginary parts have a common contribution of rank one:
\begin{align}
\begin{split}
 {\bf u}_{m} &= {\bf s }_{m} + \sqrt{\theta} w_m \boldsymbol{\phi},\\
 {\bf v}_{m} &= {\bf t }_{m} + \sqrt{\theta} w_m \boldsymbol{\phi},
\end{split}
\label{eq:modspike}
\end{align}
where $\theta>0$, $\boldsymbol{\phi} \in \mathbb{R}^N$ is a normed deterministic vector {\color{blue} $||\boldsymbol{\phi}||_2=1$}, $w_m$ are i.i.d. Gaussian centered random
variables with unit variance,  and ${\bf s }_{m}$, ${\bf t }_{m}$ are Gaussian
i.i.d. vectors in $\mathbb{R}^N$, for $1 \le m \le M$.
This scenario depicts a case where a low-rank improper signal is corrupted by proper noise.
Straightforward computations
show that there is a single non-zero population impropriety coefficient, a {\em spike}, which expresses as
$\lambda_1= \frac{\theta} {1 + \theta}$.

Fig. \ref{fig:hist_spike} displays the empirical distribution of the squares of the sample impropriety coefficients
under alternative $H_1$ spiked model for different spike level $\lambda_1$. Again, the bulk of these
coefficients matches
very well the limiting distribution derived under the  properness hypothesis $H_0$, whatever the spike level.
In addition, for ``weak'' spikes, {\em i.e.} when $\lambda_1^2$ is small relative to the phase transition threshold $\rho_c$
defined in Theorem \ref{thm:spike},
the greatest sample impropriety coefficient $r_1$, which is an estimator of the spike power $\lambda_1^2$,
does not separate from this bulk and is stuck around the edge $c$ of the limiting
distribution. Conversely, for stronger spikes where $\lambda_1^2 > \rho_c$, $r_1$ clearly separates
from the bulk and concentrates around the limiting value $\overline{\rho}_1$.
This numerically supports Theorem \ref{thm:spike}.

\begin{figure}[htbp!]
    \begin{center}
    \includegraphics[width=0.99\linewidth]{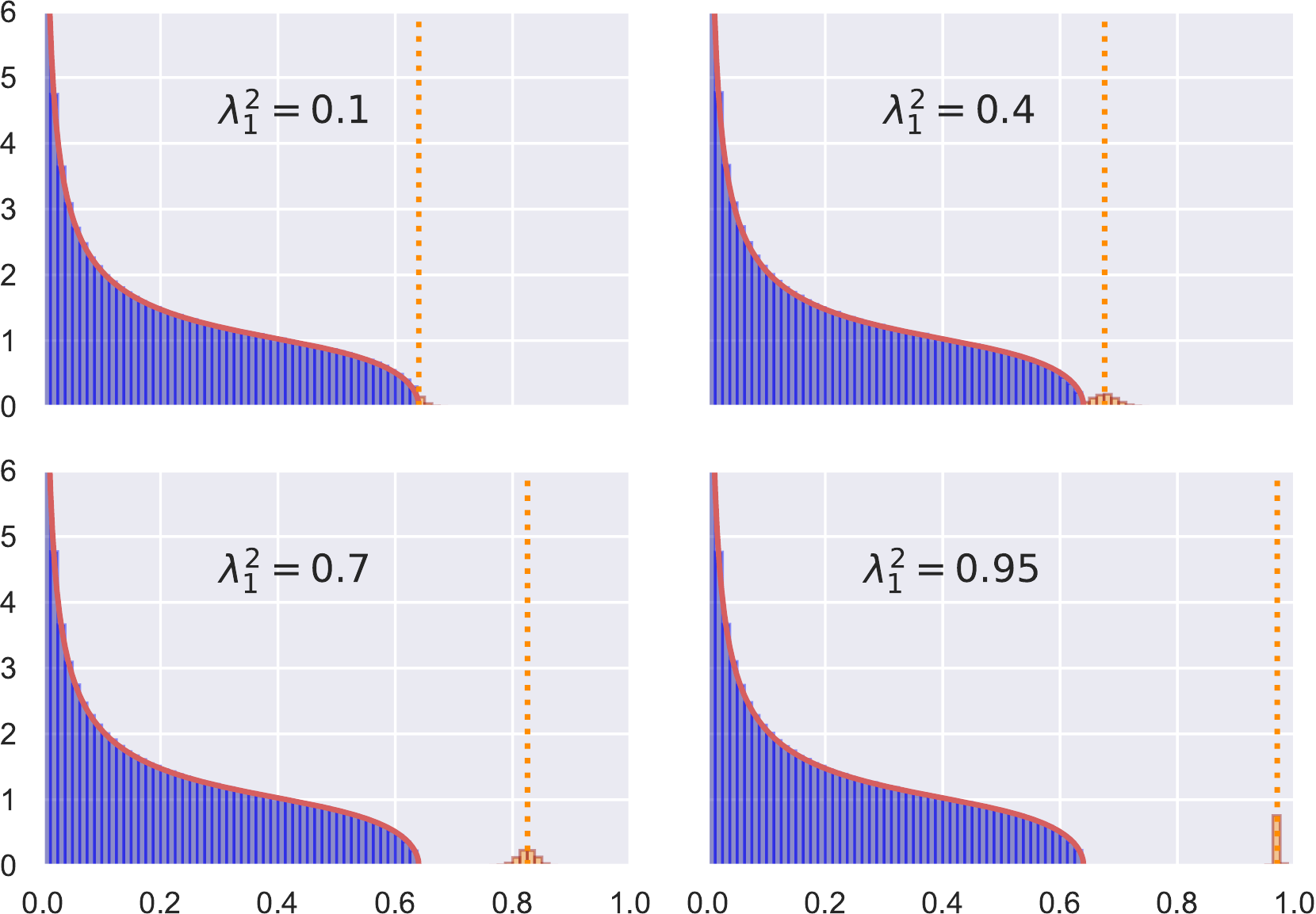}
    \end{center}
    \caption{Histogram of the squared sample impropriety coefficients under the
    spiked impropriety model described in \eqref{eq:modspike} with $N=100$
    and $\gamma=5$. Blue bars are for the bulk of the squared impropriety coefficients lower than the edge of the limiting distribution $c$, orange bars
    are for the ones greater than  $c$. The pdf of the limiting distribution under $H_0$ given in Theorem \ref{thm:pdfev} is shown in solid red  line.
    The limiting spike values   $\overline{\rho}_1$ defined in Theorem \ref{thm:spike} are depicted as vertical dotted orange  lines.
    The phase transition threshold is here $\rho_c=0.25$.
    Top row sub-figures are for a squared population impropriety coefficient
    $\lambda_1^2=0.1 < \rho_c$ (left) and $\lambda_1^2=0.4> \rho_c$ (right),
    bottom row is likewise for $\lambda_1^2=0.7> \rho_c$ and $\lambda_1^2=0.95>
\rho_c$ respectively. Histograms obtained with $1000$ Monte-Carlo runs.
    \label{fig:hist_spike}
    }
\end{figure}

In Fig. \ref{fig:power_spike} the power of both GLRT and Roy's tests are displayed as a function of the spike power $\lambda_1^2$.
This shows that for ``weak'' spikes, {\em i.e.} when $\lambda_1^2 < \rho_c$, the two tests have a very low power.
In fact, the largest sample impropriety coefficient $r_1$
does not separate from the bulk and cannot be detected correctly using Roy's test. It is interesting to note that
GLRT, which uses the information in all the coefficients, is here slightly ``more powerful'' to detect such weak spikes.
Nevertheless, as soon as $r_1$ separates from the bulk,  {\em i.e.} for stronger spikes where $\lambda_1^2 > \rho_c$, Roy's test
becomes much more powerful than the GLRT, with a power that converges quickly towards $1$ as expected.


\begin{figure}[htbp!]
    \begin{center}
    \includegraphics[width=0.8\columnwidth]{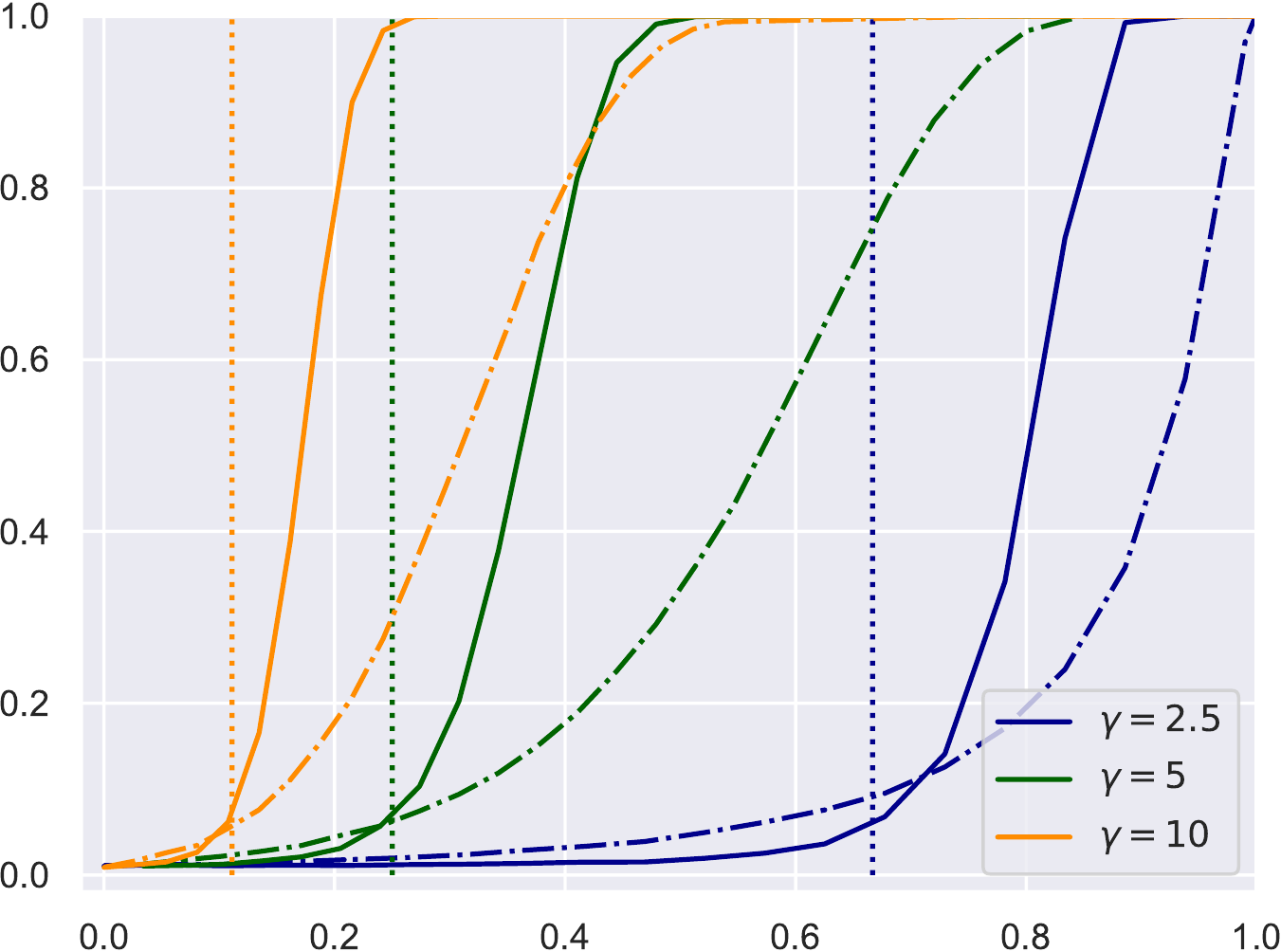}
    \end{center}
    \caption{Power of Roy's test, solid line, and GLRT, dashdotted line, vs
    the largest squared population impropriety coefficient $\lambda_1^2$  under the
    spiked impropriety model described in \eqref{eq:modspike}.
    The phase transition thresholds $\rho_c$
    given in Theorem \ref{thm:spike} are depicted as vertical dotted lines.
    Results are shown for different values of $\gamma$: blue lines for
$\gamma=2.5$, green for $\gamma=5$, orange for $\gamma=10$ (dimension $N=100$,
nominal value of false alarm probability $\alpha=0.01$).
Powers estimated with $1000$ Monte-Carlo runs.}
    \label{fig:power_spike}
\end{figure}

{\color{blue}
\subsubsection{Mixed scenario}
We consider now a mixed scenario with some highly improper, some less improper, and also some proper components. More precisely, we generate power imbalances between the real and imaginary parts as follows
\begin{align}
\begin{split}
 {\bf u}_{m} &= {\bf s }_{m}, \\
 {\bf v}_{m} &= {\bf t }_{m} + \sqrt{2\theta} {\bf w}_m,
\end{split}
\label{eq:modpca}
\end{align}
where $\theta>0$, ${\bf s }_{m}$, ${\bf t }_{m}$ are Gaussian
i.i.d. vectors in $\mathbb{R}^N$, for $1 \le m \le M$,
and  ${\bf w}_m$ are i.i.d. Gaussian centered vector
where
$p_k$ denotes, in a principal component analysis,  the fraction of variance explained by the $k$th principal component, hence $1 \ge p_1 \ge p_2 \ge  \ldots p_N \ge 0$ and $p_1 + \cdots +p_n=1$, and such that ${\bf w}_m$ have a unit total variance, {\em i.e.} $E\left[||{\bf w}_m||_2^2\right]=1$.
Thus the ordered population impropriety coefficients read $\lambda_k = \frac{\theta p_k}{1 + \theta p_k}$, for $1\le k \le N$.

Fig. \ref{fig:power_pca} shows the power of both GLRT and Roy's test as a function of the principal spike power $\lambda_1^2$.  In this setting, the dimension is $N=20$ and we have that:
\begin{itemize}
 \item $95\%$ of the variance of the source ${\bf w}_m$ is explained by its first five principal components ($p_1=50\%$, $p_2=20\%$, $p_3=p_4=10\%$, $p_5=5\%$) which yield
 the most significant (i.e. the largest) impropriety coefficients
 \item the remaining $5\%$ are explained by the following five components ($p_6= \ldots=p_{10}=1\%)$, which yield significantly smaller impropriety coefficients.
\end{itemize}
As a consequence, half of the impropriety coefficients are non zero,  while the other half are zero: $\lambda_{11}= \ldots = \lambda_{20}=0$.
This scenario mimics usual principal component analysis where the  interesting source lives on a lower dimensional space and its explained variance gradually decreases with the order of the principal components. These curves emphasize that GLRT, which uses all the sample coefficients, can be much more powerful than Roy's test for small sample size ($\gamma=2.5$ and $\gamma=5$). However for larger values of $\gamma$, Roy's test becomes significantly more powerful than GLRT for stronger spikes $\lambda_1^2$. Note also that even if half of the impropriety coefficients are non-zero,
the phase transition behavior given in Thm. \ref{thm:spike} is still present for Roy's test.

\begin{figure}[htbp!]
    \begin{center}
    \includegraphics[width=0.8\columnwidth]{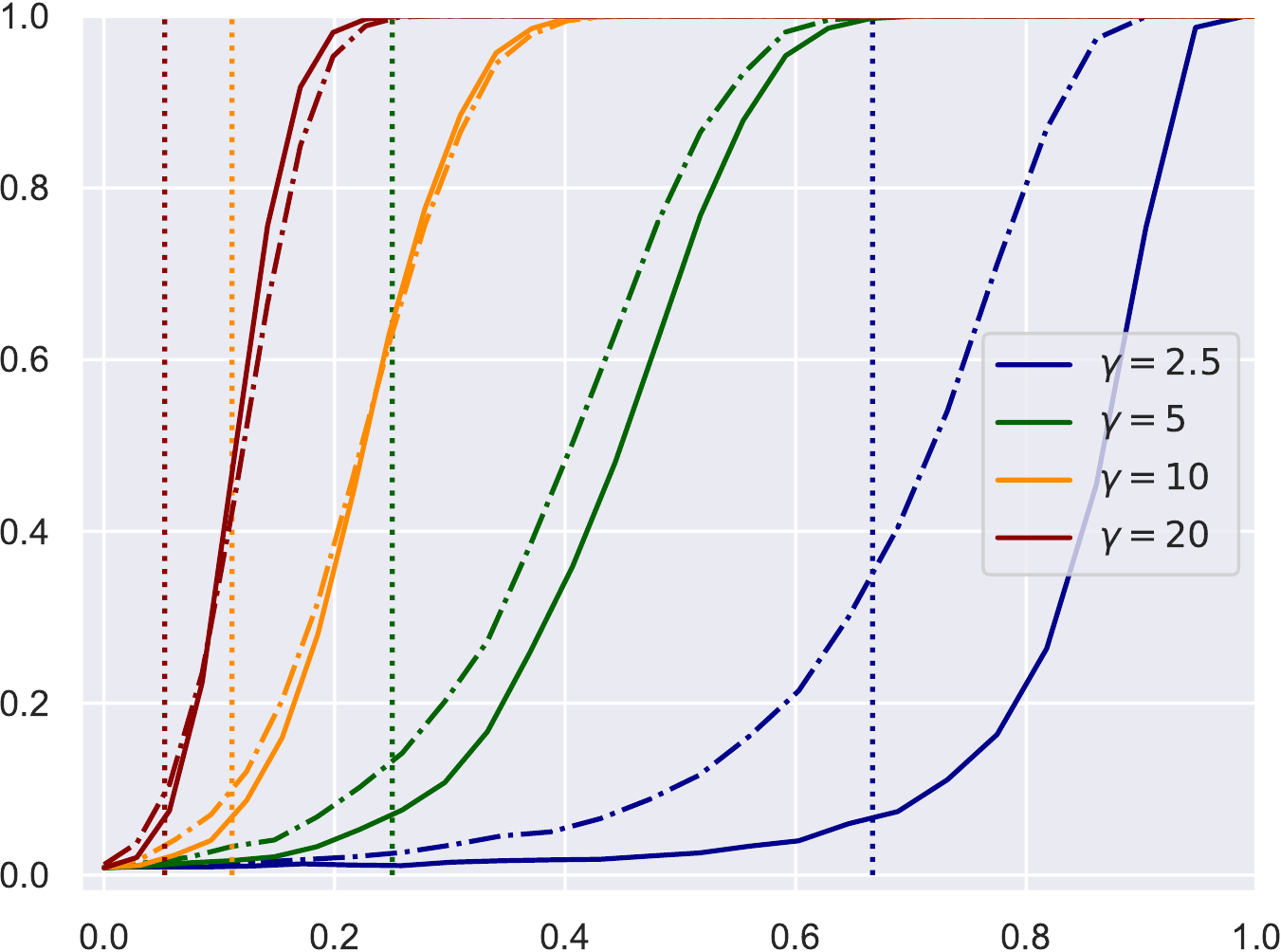}
    \end{center}
    \caption{\color{blue}Power of Roy's test, solid line, and GLRT, dashdotted line, vs
    the largest squared population impropriety coefficients $\lambda_1^2$  under the
    model described in \eqref{eq:modpca} for dimension $N=20$, with
    $p_1=50\%$, $p_2=20\%$, $p_3=p_4=10\%$, $p_5=5\%$, $p_6=\ldots=p_{10}=1\%$,
    and $p_k=0$ for $11 \le k \le 20$.
    The phase transition thresholds $\rho_c$
    given in Theorem \ref{thm:spike} are depicted as vertical dotted lines.
    Results are shown for different values of $\gamma$: blue lines for
$\gamma=2.5$, green for $\gamma=5$, orange for $\gamma=10$  and red for
 $\gamma = 20$ (nominal value of false alarm probability $\alpha=0.01$).
Powers estimated with $1000$ Monte-Carlo runs.}
    \label{fig:power_pca}
\end{figure}

}

\section{Relation with existing work on CCA}
\label{sec:discuss}
In CCA, one is classically interested in testing for independence between $N$-dimensional  real or complex Gaussian random vectors ${\bf x}$ and ${\bf y}$. Invariant statistics to reject the null hypothesis of independence consist in
the squared canonical correlation coefficients.
%
When  ${\bf x_i}$ and ${\bf y_i}$, $1 \le i \le M$, are $M$ i.i.d. copies of $N$-dimensional real Gaussian independent vectors, the $N$ squared sample canonical correlation coefficients are known to be jointly distributed as the eigenvalues of a matrix-variate beta distribution
$\mathcal{B}_N(\frac{n_1}{2},\frac{n_2}{2} )$,
with $n_1=N$ and $n_2=M-N$ as shown in \cite[Section 11.3]{Muirhead:2005}.

For the impropriety case, Prop. \ref{prop:pdfBeta} shows that the $N$ squared impropriety coefficients are jointly distributed as the eigenvalues of a matrix-variate beta distribution $\mathcal{B}_N(\tfrac{n_1}{2},\tfrac{n_2}{2})$
with parameters $n_1=N+1$ and $n_2=M-N$. Surprisingly,
this is quite close to the real and independent CCA case, the difference being a $+1$ offset in the $n_1$ parameter. Despite this similarity, it is important to note the following points.

\subsection*{Independence is not compatible with impropriety testing}
The two regimes of parameters {\color{blue}(in the matrix-variate beta distributions)} obtained for the real and independent CCA case or the impropriety case remain incompatible. This is intrinsically due to the fact that the underlying assumptions are not compatible. As shown for instance by \eqref{eq:CCA}, the impropriety coefficients are the canonical correlation coefficients between vectors ${\bf z}$ and ${\bf z^*}$ which are complex-valued and fully dependent.

\subsection*{Impropriety testing is a more structured problem}

Consequently we do not think that classical results for CCA between real, or complex, independent vectors can be readily extended or adapted to the impropriety detection problem. Our contribution is thus a way to overcome this and to provide new insights on the impropriety problem. In this paper we have precisely proposed a direct characterization of the usual impropriety testing statistics.
Furthermore, to our knowledge, the numerous existing works on impropriety have never been able to easily adapt these CCA results to characterize impropriety coefficients in finite or even asymptotic regimes.


\subsection*{The $+1$ offset is non-negligible}
In the high dimensional regime where both $M$ and $N$ go to infinity, it might be tempting to consider that the $+1$ offset in the $n_1$ parameter of the matrix-beta distribution becomes negligible, and thus that the distributions of the  statistics for testing the independence between real Gaussian vectors or for the impropriety detection problem are asymptotically equivalent. This is actually incorrect.  The $+1$ offset modifies the asymptotic distribution of any linear spectral statistic such as the GLRT, but also of the largest eigenvalue one, and therefore of Roy's test. For instance,
 Theorem 6 and its demonstration allows us to show that the asymptotic mean $m'$ of the GLRT statistics derived for the parameters $n_1=N$ and $n_2=M-N$ (real and independent CCA case) is $m'= m + \ln{\frac{\gamma-1}{\gamma}}$ where $m$ is the asymptotic mean of the impropriety GLRT statistics given in Theorem 6, while the asymptotic variance remains bounded. An illustration of this mismatch for both GLRT and Roy's test is depicted in Fig. \ref{fig:CCAcomp}.

\begin{figure}[hbtp!]
  \centering
  \includegraphics[width=\columnwidth]{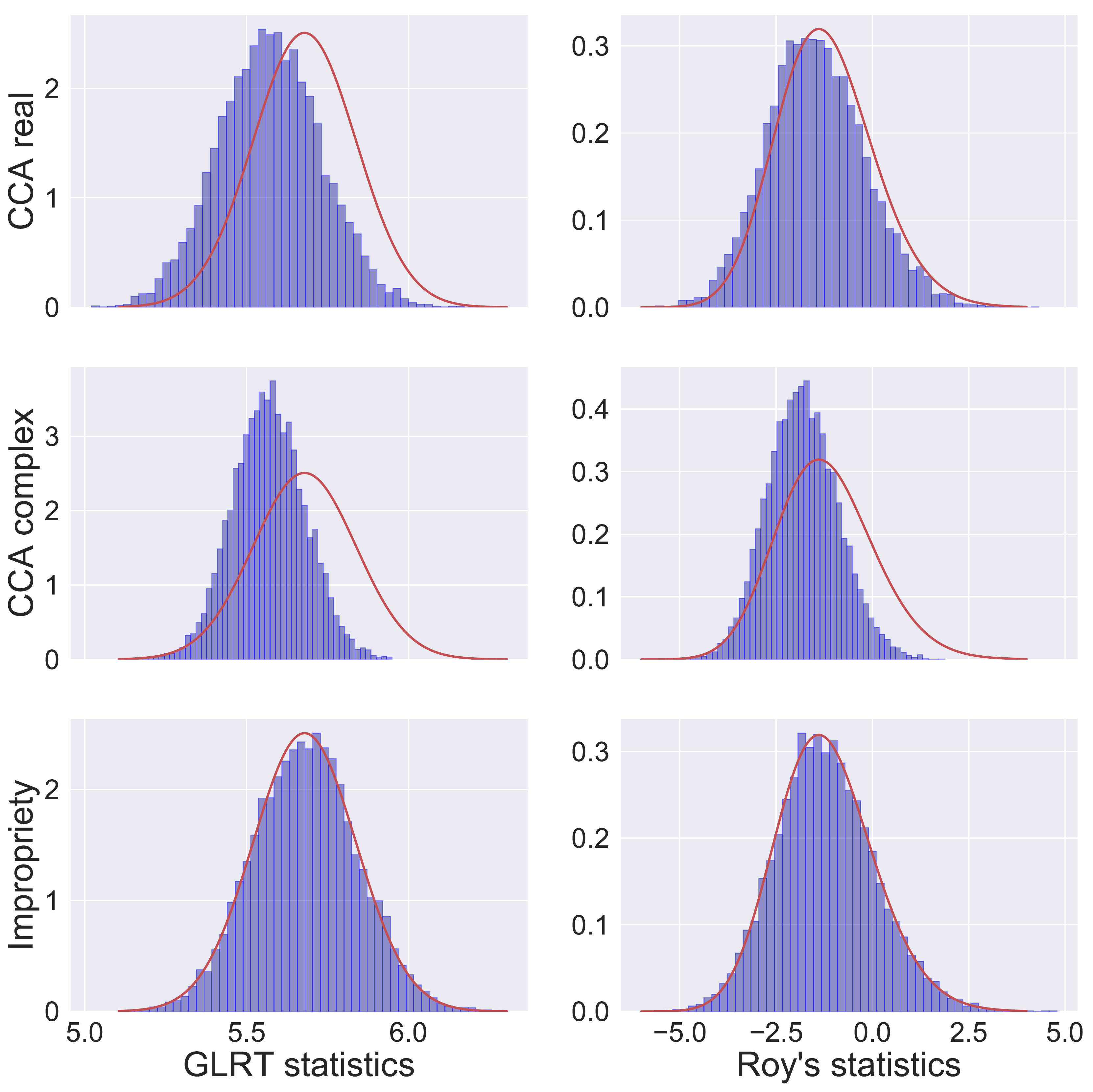}
  \caption{Histograms for $M=500$ and $N = 50$ ($\gamma=10$) of the GLRT statistics (left column) and Roy's statistics (right column). The pdf of the GLRT and Roy's test limiting distribution for impropriety testing given in Thm. \ref{thm:asympt} and Thm. \ref{thm:Roy} are shown in solid red line in the left and right column respectively.
  Top row:  CCA statistics for {independent} and {real} vectors.
  Middle row:  CCA for {independent} and {complex} vectors.
  Bottom row: Impropriety statistics. Histograms obtained with 10000 Monte-Carlo runs.\label{fig:CCAcomp}}
\end{figure}

\section{Concluding remarks}
Properness testing for complex Gaussian random vectors in the asymptotic regime
 relies on the characterization of sample impropriety coefficients. In particular,
their limiting distributions give access to the behavior of classical
 GLRT and Roy's test. The results presented in this article
demonstrate that the asymptotic regime is actually reached quite rapidly in
practice and the proposed original  approximations are well-suited to
a wide range of complex-valued datasets.

The phase transition highlighted in Roy's test has also potential
applications in the search for complex-valued low-rank signals corrupted by
proper noise in large datasets. {\color{blue}In this context, the proposed high dimensional approximations can be extended to the sequential testing problem \cite{Novey:2011} to estimate the number of improper sources, {\em i.e.} of non-zero impropriety coefficients.}
Another natural extension of the
proposed work consists of considering the case of
quaternion random vectors which possess several properness levels, thus trying
to decipher their correlation symmetry patterns. This could be
helpful in the spectral characterization of bivariate signals
\cite{flamant2017} among other quaternion signal processing applications.



\section*{Acknowledgment}
The authors would like to thank Prof. Romain Couillet for his many valuable comments and suggestions.

\appendices

\section{Proof of Theorem \ref{thm:pdfev} (limiting empirical distribution)}

\label{app:pdf}
{\color{blue}
According to Proposition \ref{prop:pdfBeta}, the sample squared  impropriety
coefficients are distributed under $H_0$ as the eigenvalues  of a matrix-variate Beta distribution $\mathcal{B}_N(\frac{1}{2} n_1,\frac{1}{2}  n_2)$. Moreover, \cite[Thm 3.3.1, p. 109]{Muirhead:2005} shows that the eigenvalues of  a  $\mathcal{B}_N(\frac{1}{2} n_1,\frac{1}{2}  n_2)$ random matrix are distributed as the eigenvalues of ${\bf A}({\bf A}+{\bf B})^{-1}$, where
${\bf A}, {\bf B} \in {\mathbb R}^{N \times N}$ are two independent random matrices  with respective distributions the Wishart laws $W_N(n_1,{\bf I}_N)$ and $W_N(n_2,{\bf I}_N)$, where $n_1$ and $n_2$ are their respective number of {\em degrees of freedom} and where the Wishart {\em scaling matrix} parameter  is set to the identity matrix ${\bf I}_N$. Thus the limiting distribution for
the impropriety coefficients can be derived as the limiting distribution of the  eigenvalues of ${\bf A}({\bf A}+{\bf B})^{-1}$.

Assume now that we are in the asymptotic regime where $N,n_1,n_2\, \rightarrow \, +\infty$ with $N/n_1 \, \rightarrow \, d \in (0,1]$ and
$N/n_2 \, \rightarrow \, d' \in (0,1)$.

Then, as demonstrated in \cite{silverstein1985}, the empirical law of the eigenvalues of the  following $F$-matrix
$\tfrac{d}{d'} {\bf A} {\bf B}^{-1}$ converges to a distribution with density given as:
\begin{align}
f(x) = \frac{ (1-d')  \sqrt{ (x-a)(b-x) } } { 2 \pi  x  (x d + d') },
\label{eq:pdfproof}
\end{align}
on the interval $x \in [a,b]$, where
$a= \left( \tfrac{ 1 - \sqrt{ 1 -(1-d) (1-d') } } { 1 - d' } \right)^2$ and
$b= \left( \tfrac{ 1 + \sqrt{ 1 -(1-d) (1-d') } } { 1 - d' } \right)^2$.

Note that each eigenvalue $r_i$ of ${\bf A}({\bf A}+{\bf B})^{-1}$ can be deduced from each eigenvalue $x_i$ of $\tfrac{d}{d'} {\bf A} {\bf B}^{-1}$ thanks to the relation $r_i= \tfrac{d' x_i}{d+ d'x_i}$.
The {continuous mapping theorem} ensures therefore that the asymptotic law of the eigenvalues of a $\mathcal{B}_N(\frac{1}{2} n_1,\frac{1}{2}  n_2)$  random matrix  can be directly deduced from \eqref{eq:pdfproof} using the aforementioned change of variable.

Last, according to proposition \ref{prop:pdfBeta}, the parameters for the matrix-variate Beta distribution in our case are $n_1= N+1$ and $n_2=M-N$. Due to the asymptotic regime stated above, one gets
that $ d= \lim\,  N/n_1  = 1 $ and $d'= \lim\, N/n_2= \tfrac{1}{\gamma-1}$. Plugging this parameter values in the asymptotic distribution of the eigenvalues of the  $\mathcal{B}_N(\frac{1}{2} n_1,\frac{1}{2}  n_2)$ random matrix deduced from \eqref{eq:pdfproof} gives finally the limiting density given in Theorem \ref{thm:pdfev} and concludes the proof.
}

\section{Proof of Theorem \ref{thm:asympt} (central limit theorem)}

\label{app:asympt}

According to theorem \ref{thm:GLRT}, $T'= \sum_{n=1}^N \zeta_n$ where the $\zeta_n$ are independent random variables such that $\zeta_n= - \ln u_n$ with $u_n$ beta distributed s.t.
$u_n \sim \mathcal{B}\left( a_n, b \right)$, $a_n=\tfrac{M-N-n+1}{2}$, $b=\tfrac{N+1}{2}$,  for $1\le n \le N$.
Based on the centered moments of a logarithmically transformed beta-distributed variable as given in \cite{nadarajah2006},
then $E [\zeta_n]= \psi(a_n+b)- \psi(a_n)$ where $\psi(\cdot)$ is the digamma function, and
$\var[\zeta_n]= \psi_1(a_n) - \psi_1(a_n+b)$ where $\psi_1(\cdot)$ is the trigamma function.
Using asymptotic expansions of the digamma and trigamma functions for large argument $x$,
\begin{align*}
\psi(x)& = \log{x} - \tfrac{1}{2x} + O\left(\tfrac{1}{x^2}\right),
\quad \psi_1(x)=  \tfrac{1}{x} +\tfrac{1}{2x^2} + O\left(\tfrac{1}{x^3}\right),
\end{align*}
  plus Stirling formula $\log{(n!)}= n \log{n} -n + \tfrac{1}{2}
 \log{(2\pi n)} + O\left(\tfrac{1}{n}\right)$ for large $n$,
straightforward computations omitted here for the sake of brevity yield that
$E[T']= \sum_{n=1}^N E[\zeta_n]= m_M + O(1/M)$ and $\var(T')=\sum_{n=1}^N \var(\zeta_n) = s_M^2 + O(1/M^2)$.

In order to apply Lyapunov central limit theorem \cite[p. 362]{billingsley1995probability} to
${T'= \sum_{n=1}^N \zeta_n}$, it is sufficient to show that
\begin{align*}
 \frac {1}{\var(T')^2} \sum_{n=1}^N E\left[ \left( \zeta_n  - E[\zeta_n] \right)^4 \right]\, \rightarrow \, 0.
\end{align*}
The expression of the fourth order centered moment of $\zeta_n$ gives that
$E\left[ \left( \zeta_n  - E[\zeta_n] \right)^4 \right] = O\left( 1/(M-n+2)^2 \right)$ for $1\le n \le N$. Then
$\sum_{n=1}^N E\left[ \left( \zeta_n  - E[\zeta_n] \right)^4 \right]= O(1/M)$.
As $\var(T')= s_{M}^2 + O(1/M^2)$, it comes that $\var(T')^{-2}= O(1)$ and
 the previous Lyapunov sufficient condition holds. Thus
$$Z \equiv \frac {1}{\sqrt{ \var(T') }} \sum_{n=1}^N \left( \zeta_n  - E[\zeta_n]  \right) \ {\xrightarrow {d}} \ \mathcal{N}(0,1).$$
By noting finally that $\frac{1}{s_M} (T'-m_M)= Z + O(1/M)$, Slutsky's theorem allows us to conclude the proof.

\bibliography{references.bib}
\bibliographystyle{IEEEtran}

\end{document}